\declaretheorem[style=plain,qed=$\blacksquare$]{theorem}
\declaretheorem[style=plain,name=Definition,qed=$\blacksquare$]{Definition}
\declaretheorem[style=plain,name=Lemma]{lemma}
\declaretheorem[style=plain,name=Remark,qed=$\blacksquare$]{remark}
\declaretheorem[style=plain,name=Example,qed=$\blacksquare$]{example}
\declaretheorem[style=plain,name=Proposition,qed=$\blacksquare$]{proposition}
\def\mc{\ensuremath\mathcal}
\newcounter{opt_ct}
\begin{document}


\title{Coded Caching for Heterogeneous Systems: {A}n Optimization Perspective\thanks{ This work was presented in part at the IEEE Wireless Communications and Networking Conference (WCNC), San Francisco, CA, 2017, and the IEEE International Conference on Communications (ICC), Paris, France, 2017. This work was supported in part by NSF Grants 1526165 and 1749665.}}

\author{
\IEEEauthorblockN{Abdelrahman M. Ibrahim, Ahmed A. Zewail and Aylin Yener}\\
    \IEEEauthorblockA{Wireless Communications and Networking Laboratory (WCAN)\\
    The School of Electrical Engineering and Computer Science\\
    The Pennsylvania State University, University Park, PA 16802\\
   \textit{ami137@psu.edu \ \ \ zewail@psu.edu \ \ \ yener@ee.psu.edu}
}
}
\maketitle
\vspace{-0.9in}
\begin{center}
\end{center}
\vspace{-0.1in}
\doublespacing
\begin{abstract}
\vspace{-0.1in}

In cache-aided networks, the server populates the cache memories at the users during low-traffic periods, in order to reduce the delivery load during peak-traffic hours. In turn, there exists a fundamental trade-off between the delivery load on the server and the cache sizes at the users. In this paper, we study this trade-off in a multicast network where the server is connected to users with unequal cache sizes and the number of users is less than or equal to the number of library files. We propose centralized uncoded placement and linear delivery schemes which are optimized by solving a linear program. Additionally, we derive a lower bound on the delivery memory trade-off with uncoded placement that accounts for the heterogeneity in cache sizes. We explicitly characterize this trade-off for the case of three end-users, as well as an arbitrary number of end-users when the total memory size at the users is small, and when it is large. Next, we consider a system where the server is connected to the users via rate limited links of different capacities and the server assigns the users' cache sizes subject to a total cache budget. We characterize the optimal cache sizes that minimize the delivery completion time with uncoded placement and linear delivery. In particular, the optimal memory allocation balances between assigning larger cache sizes to users with low capacity links and uniform memory allocation.

\end{abstract}
\vspace{-0.05in}
\begin{IEEEkeywords}
\vspace{-0.1in}
Coded caching, uncoded placement, cache size optimization, multicast networks. 
\end{IEEEkeywords}
\vspace{-0.1in}


\newpage
\section{Introduction}
The immense growth in wireless data traffic is driven by video-on-demand services, which are expected to account for $82\%$ of all consumer Internet traffic by $2020$ \cite{cisco}. The high temporal variation in video traffic leads to under-utilization of network resources during off-peak hours and congestion in peak hours \cite{almeroth1996use}. Caching improves uniformization of network utilization, by pushing data into the cache memories at the network edge during off-peak hours, which in turn reduces congestion during peak hours. The seminal work \cite{maddah2014fundamental} has proposed a novel caching technique for a downlink setting, in which a server jointly designs the content to be placed during off-peak hours and the delivered during peak hours, in order to ensure that multiple end-users can benefit from delivery transmissions simultaneously. These multicast coding opportunities are shown to provide gains beyond  local caching gains, which result from the availability of a fraction of the requested file at the user's local cache. They are termed global caching gains since they scale with the network size. Reference \cite{maddah2014fundamental} has shown that there exists a fundamental trade-off between the delivery load on the server and the users' cache sizes. 

The characterization of this trade-off has been the focus of extensive recent efforts  \cite{wan2016optimality,wan2017novel,yu2016exact,chen2014fundamental,amiri2017fundamental,
gomez2016fundamental,ghasemi2017improved,lim2017information,tian2017uncoded,
wang2017improved,yu2017characterizing}. In particular, references \cite{wan2016optimality,wan2017novel,yu2016exact} have characterized the delivery load memory trade-off with the uncoded placement assumption, i.e., assuming that the users cache only uncoded pieces of the files. The delivery load memory trade-off with general (coded) placement has been studied in \cite{chen2014fundamental,amiri2017fundamental,
gomez2016fundamental,ghasemi2017improved,lim2017information,tian2017uncoded,
wang2017improved,yu2017characterizing}. Coded caching schemes were developed for various cache-aided network architectures, such as multi-hop  \cite{ji2015comb,zewail2017combination,
wan2017caching}, device-to-device (D2D) \cite{ji2016fundamental,ibrahim2018device}, multi-server \cite{shariatpanahi2016multi}, lossy broadcast \cite{timo2015joint,ghorbel2016content,bidokhti2017benefits,amiri2018cache}, and interference networks\cite{naderializadeh2017fundamental,xu2017fundamental}. In addition to network topology, several practical considerations have been studied, such as the time-varying nature of the number of users \cite{maddah2015decentralized}, distortion requirements at the users \cite{yang2018coded,hassanzadeh2015distortion,ibrahim2018distortion}, non-uniform content distribution \cite{niesen2017coded,ji2017order,
ramakrishnan2015efficient,zhang2015coded,jin2017structural}, delay-sensitive content \cite{niesen2015delay}, and systems with security requirements \cite{ravindrakumar2018private,sengupta2015fundamental,zewail2016fundamental}. 

End-users in practical caching networks have varying storage capabilities. In this work, we address this system constraint by allowing the users to have distinct cache sizes. In particular, we study the impact of heterogeneity in cache sizes on the delivery load memory trade-off with uncoded placement. Models with similar traits have been studied in references\cite{yang2018coded,wang2015fundamental,amiri2017decentralized,sengupta2016layered}. In particular, references \cite{wang2015fundamental,amiri2017decentralized} have extended the decentralized caching scheme in \cite{maddah2015decentralized} to systems with unequal cache sizes. References \cite{yang2018coded,sengupta2016layered} have proposed a centralized scheme in which the system is decomposed into layers such that the users in each layer have equal cache size. More specifically, the scheme in \cite{maddah2014fundamental} is applied on each layer and the optimal fraction of the file delivered in each layer is optimized. Additionally, reference \cite{sengupta2016layered} has proposed grouping the users before applying the layered scheme which requires solving a combinatorial problem. In a follow-up work to some of our preliminary results presented in \cite{ibrahim2017centralized}, reference \cite{daniel2017optimization} proposed optimizing over uncoded placement schemes assuming the delivery scheme in \cite{maddah2015decentralized}.

In this work, we focus on uncoded placement and linear delivery, where the server places uncoded pieces of the files at the users' cache memories, and the multicast signals are formed using linear codes. Our proposed caching scheme outperforms the schemes in \cite{yang2018coded,sengupta2016layered,daniel2017optimization}, because it allows flexible utilization of the side-information in the creation of the multicast signals, i.e., the side-information stored exclusively at $t$ users is not restricted to multicast signals of size $t+1$ as in \cite{maddah2014fundamental,yang2018coded,wang2015fundamental
,amiri2017decentralized,sengupta2016layered,daniel2017optimization}. We show that the worst-case delivery load is minimized by solving a linear program over the parameters of the proposed caching scheme. In order to evaluate the performance of our caching scheme, we derive a lower bound on the worst-case delivery load with uncoded placement. Using this bound, we explicitly characterize the delivery load memory trade-off for arbitrary number of users with uncoded placement in the small total memory regime, large total memory regime, the definitions of which are made precise in the paper, and for any memory regime for the instance of three-users. Furthermore, we compare the achievable delivery load with the proposed lower bound with uncoded placement, and the lower bounds with general placement in \cite{wang2017improved,amiri2017decentralized}. From the numerical results, we observe that our achievable delivery load coincides with the uncoded placement bound.

Next, inspired by the schemes developed for distinct cache sizes we consider a middle ground between noiseless setups \cite{maddah2014fundamental,yang2018coded,sengupta2016layered} and noisy broadcast channels with cache-aided receivers \cite{timo2015joint,ghorbel2016content,
bidokhti2017benefits,amiri2018cache}. More specifically, we assume that the server is connected to the users via rate limited links of different capacities, and the server assigns the users' cache sizes subject to a cache memory budget. Reference \cite{tang2017coded} has considered a similar model and proposed jointly designing the caching and modulation schemes. Different from \cite{timo2015joint,ghorbel2016content,
bidokhti2017benefits,amiri2018cache,tang2017coded}, we consider a separation approach where the caching scheme and the physical layer transmission scheme are designed separately. This is inline in general with the approach of \cite{maddah2014fundamental} and followup works that consider server to end-users links as bit pipes. We focus on the joint optimization of the users' cache sizes and the caching scheme in order to minimize the worst-case delivery completion time. More specifically, the optimal memory allocation, uncoded placement, and linear delivery schemes are again obtained by solving a linear program. For the case where the cache memory budget is less than or equal to the library size at the server, we derive closed form expressions for the optimal memory allocation and caching scheme. We observe that the optimal solution balances between assigning larger cache memories to users with low capacity links, delivering fewer bits to them, and uniform memory allocation, which maximizes the multicast gain.



\begin{figure*}[t]
	\centering
	\begin{tabular}{cc}
	\hspace{-0.2in} \subfloat[Fixed cache sizes and equal download rates.]{
				\label{fig_sys_model}
				\includegraphics[scale=1.5]{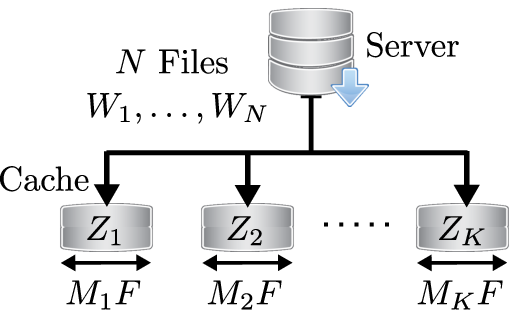} }
		
		& \hspace{-0.1in}
		\subfloat[Cache memory budget and unequal download rates.]{ \label{fig_sys_model_gen}
			\includegraphics[scale=1.5]{}}
		\\
	\end{tabular}       
	\caption{Centralized caching system with unequal cache sizes.}
\vspace{-0.2in}
\end{figure*} 
\vspace{-0.1in}
\section{System Model}\label{sec_sys_mod}
\textit{Notation:} Throughout the paper, vectors are represented by boldface letters, sets of policies are represented by calligraphic letters, e.g., $\mathfrak{A} $, $ \oplus$ refers to bitwise XOR operation, $(x)^{+}\!  \triangleq \! \max \{0,x\} $, $|W|$ denotes the size of $W$, $\mc A \setminus \mc B $ denotes the set of elements in $\mc A$ and not in $\mc B $, $ \phi$ denotes the empty set, $[K] \triangleq \{1,\dots,K\}$, $\mc A \subset \mc B $ denotes $\mc A$ being a subset of or equal to $\mc B$, $\subsetneq_{\phi} [K]$ denotes non-empty subsets of $[K]$, and $\mc P_{\mc A}$ is the set of all permutations of the elements in the set $\mc A$, e.g., $\mc P_{\{1,2\}}= \{[1,2], \ [2,1]\}$.

Consider a centralized system consisting of a server connected to $K$ users via an error-free multicast link \cite{maddah2014fundamental}, see Fig. \subref*{fig_sys_model}. A library $\{ W_{1}, \dots, W_{N}\}$ of $N$ files, each with size $F$ bits, is stored at the server. User $k$ is equipped with a cache memory of size $M_k F$ bits. Without loss of generality, we assume that $M_1 \leq M_2 \leq \dots \leq M_K$. We define $m_k=M_k/N$ to denote the memory size of user $k$ normalized by the library size $N F$, i.e., $m_k \in [0,1]$ for $M_k \in [0,N]$. The cache size vector is denoted by $\bm M =[M_1,\dots,M_K]$ and its normalized version by $\bm m =[m_1,\dots,m_K]$. We focus on the case where the number of files is larger than or equal to the number of users, i.e., $N \geq K $. 

In Section \ref{sec_mem_opt}, we introduce rate limited download links of distinct capacities to the model. In particular, we consider that the link between the server and user $k$ has capacity $C_k$ bits per channel use, which we refer to as the \textit{download rate} at user $k$, as illustrated in Fig. \subref*{fig_sys_model_gen}. We denote the collection of link capacities by $\bm C =[C_1,\dots,C_K]$. In this setup, we seek the system configuration with best performance, including the memory sizes $\{ M_k\}$, subject to $\sum_{k=1}^{K} M_k F \leq  m_{\text{tot}} N F$ bits, where $m_{\text{tot}}$ is the cache memory budget normalized by the library size.

The system operates over two phases: placement phase and delivery phase. In the placement phase, the server populates users' cache memories without knowing the users' demands. User $k$ stores $Z_k$, subject to its cache size constraint, i.e., $|Z_k| \leq M_k F$ bits. Formally, the users' cache contents are defined as follows.
\begin{Definition}(Cache placement) A cache placement function $\varphi_k: [2^F]^N\rightarrow [2^{\lfloor M_kF \rfloor}]$ maps the files in the library to the cache of user $k$, i.e., $ Z_k = \varphi_k(W_1, W_2,..,W_N) $. 
\end{Definition}
%
%
In the delivery phase, user $k$ requests file $W_{d_k}$ from the server. Users' demand vector $\bm d=[d_1, \dots, d_K]$ consists of independent uniform random variables over the files as in \cite{maddah2014fundamental}. In order to deliver the requested files, the server transmits a sequence of unicast/multicast signals, $X_{\mc T, \bm d}$, to the users in the sets $\mc T \subsetneq_{\phi} [K]$. $X_{\mc T, \bm d}$ has length $v_{\mc T} F$ bits, and is defined as follows.
\begin{Definition}(Encoding) Given $\bm d$, an encoding function $\psi_{\mc T, \bm d}: [2^F]^{K} \rightarrow [2^{\lfloor v_{\mc T}F \rfloor}] $ maps requested files to a signal with length $v_{\mc T} F$ bits, sent to users in $\mc T$, i.e., $X_{\mc T,\bm d}= \psi_{\mc T, \bm d}(W_{d_{1}},..,W_{d_{K}})$.
\end{Definition}
At the end of the delivery phase, user $k$ must be able to reconstruct $W_{d_k}$ from the transmitted signals $X_{\mc T, \bm d}, \mc T \subsetneq_{\phi} [K]$ and its cache content $Z_k$, with negligible probability of error.
\begin{Definition}(Decoding) A decoding function $\mu_{\bm d, k}: [2^{ \lfloor RF \rfloor}] \times [2^{ \lfloor M_k F \rfloor }] \rightarrow [2^F]$, with $R \triangleq \sum\limits_{\mc T \subsetneq_{\phi} [K]} v_{\mc T}$, maps cache content of user $k$, $Z_k$, and the signals $X_{\mc T, \bm d}, \mc T \subsetneq_{\phi} [K]$ to $\hat W_{d_k}$, i.e., $\hat W_{d_k} = \mu_{\bm d, k}\left(X_{\{1\}, \bm d},X_{\{2\}, \bm d},\dots,X_{[K], \bm d}, Z_k \right)$. 
\end{Definition}
A caching scheme is defined by $(\varphi_k(.),\psi_{\mc T, \bm d}(.),\mu_{\bm d, k}(.))$. The performance is measured in terms of the achievable delivery load, which represents the amount of data transmitted by the server in order to deliver the requested files. 
\begin{Definition} 
For a given normalized cache size vector $\bm m$, the delivery load $R(\bm m)$ is said to be achievable if for every $\epsilon > 0$ and large enough $F$, there exists $(\varphi_k(.),\psi_{\mc T, \bm d}(.),\mu_{\bm d, k}(.))$ such that $\max\limits_{\bm d, k \in [K]} Pr(\hat W_{d_k}\neq W_{d_k})\leq \epsilon$, and $R^*(\bm m) \triangleq \inf \{ R : R(\bm m) \text{ is achievable} \}$. 
\end{Definition}%

The set of cache placement policies $\mathfrak{A}$ considered in this work are the so-called uncoded policies, i.e., only pieces of individual files are placed in the cache memories. Since we have uniform demands, the cache memory at each user $k$ is divided equally over the files, i.e., $m_k F$ bits per file. We consider the set of delivery schemes $\mathfrak{D}$, in which multicast signals are formed using linear codes. The worst-case delivery load achieved by a caching scheme in $(\mathfrak{A},\mathfrak{D})$ is defined as follows.
\vspace{-0.15in} 
\begin{Definition} With placement and delivery policies in $ \mathfrak A$ and $\mathfrak{D}$, the worst-case delivery load is defined as $R_{\mathfrak A, \mathfrak D} \! \triangleq \! \max\limits_{\bm d} R_{\bm d,\mathfrak A, \mathfrak D} \! =  \! \! \!  \! \sum\limits_{\mc T \subsetneq_{\phi} [K]} \! v_{\mc T}$, and the minimum delivery load over all $R_{\mathfrak A, \mathfrak D}$ is denoted by $R^*_{\mathfrak A,\mathfrak D}(\bm m) \triangleq \inf \{ R_{\mathfrak A,\mathfrak D} : R_{\mathfrak A, \mathfrak D}(\bm m) \text{ is achievable} \}$.
\end{Definition}
\vspace{-0.22in}
\begin{Definition} The minimum delivery load achievable with a placement policy in $ \mathfrak A$ and any delivery scheme, is defined as $R_{\mathfrak A}^*(\bm m) \triangleq \inf \{ R_{\mathfrak A} : R_{\mathfrak A}(\bm m) \text{ is achievable} \}.$ 
\end{Definition}
\begin{remark}
Note that $R_{\mathfrak A, \mathfrak D}^* \geq R^*_{\mathfrak A} \geq R^* $, since $R^*$ is obtained by taking the infimum over all achievable delivery loads, $R^*_{\mathfrak A}$ is restricted to uncoded placement policies in $\mathfrak{A}$, and $R_{\mathfrak A, \mathfrak D}^*$ is restricted to cache placement and delivery policies in $\mathfrak{A}$ and $\mathfrak{D}$, respectively.
\end{remark}


In Section \ref{sec_mem_opt}, we consider download links with limited and unequal capacities. Thus, $X_{\mc T, \bm d}$ will need to have a rate $ \leq \min\limits_{j \in \mc T} C_j$ \cite{dana2006capacity}. Additionally, there is no guarantee that the users outside the set $\mc T $ can decode $X_{\mc T, \bm d}$, as their download rates may be lower than $\min\limits_{j \in \mc T} C_j$. Consequently, a more relevant system-wide metric is the total time needed by the server to deliver all the requested files to all the users, defined as follows, assuming uncoded placement and linear delivery.

\begin{Definition}
With a placement policy in $ \mathfrak A$, and a delivery policy in $\mathfrak{D}$, the worst-case delivery completion time (DCT) is defined as $\Theta_{\mathfrak A, \mathfrak D} \triangleq \max\limits_{\bm d} \Theta_{\bm d,\mathfrak A, \mathfrak D} = \sum\limits_{\mc T \subsetneq_{\phi} [K]} \dfrac{v_{\mc T}}{\min\limits_{j \in \mc T} C_j}$.
\end{Definition}
Observe that, for $C_k=1, \forall k \in [K]$, $\Theta_{\mathfrak A, \mathfrak D} = R_{\mathfrak A, \mathfrak D} $.

\section{Motivational Example}\label{sec_motv}
In order to motivate our caching scheme which is tailored to capitalize on multicast opportunities to the fullest extent, we consider an example and compare the state-of-the-art caching schemes in \cite{yang2018coded,sengupta2016layered,daniel2017optimization} with our scheme. 

\begin{figure}[t]
\includegraphics[scale=0.65]{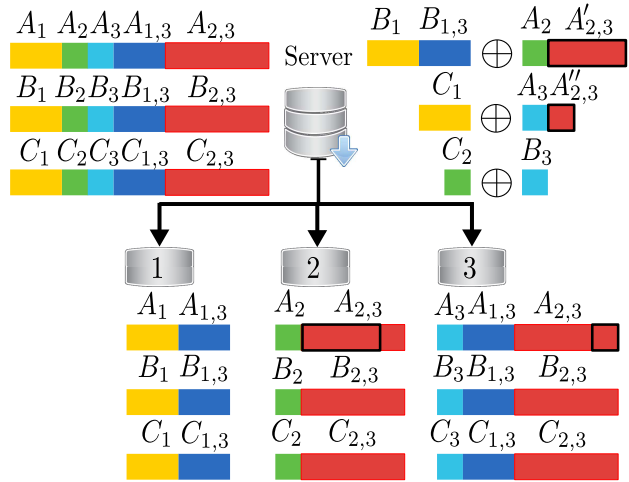}
\centering
\caption{Optimal scheme with uncoded placement for $K=N=3$ and $\bm M =[1.2,\ \! 1.5,\ \! 2.1 ]$.}\label{fig_motv}
\vspace{-0.2in}
\end{figure}
Consider a three-user system with three files, $\{A, \ \! B, \ \! C \}$, and $\bm m =[0.4, \ \! 0.5, \ \! 0.7]$. Without loss of generality, we assume that the users request files $A$, $B$, and $C$, respectively. In the placement phase, the files are divided into subfiles, which are labeled by the users exclusively storing them, e.g., subfile $A_{i,j}$ is stored at users $i$ and $j$.   
\begin{enumerate}

\item \underline{The layered scheme \cite{yang2018coded,sengupta2016layered}:} In the placement phase, the files are partitioned over three layers, we denote the files in layer $l$ by the superscript $(l)$. By optimizing the file partitioning over the layers, we get the following scheme. In layer $1$, users have equal caches with size $M_1 F$ bits and files $A^{(1)},B^{(1)},C^{(1)} $ with size $ 0.9 F$ bits, each of which is split into six disjoint subfiles, e.g., $A^{(1)}$ is divided into $A_1^{(1)}, \ \! A_2^{(1)},\ \! A_3^{(1)}, \ \! A_{1,2}^{(1)}, \ \! A_{1,3}^{(1)}, \ \! A_{2,3}^{(1)}$, where $|A_i^{(1)}|=0.2 F$, and $|A_{i,j}^{(1)}|=0.1 F$. In delivery phase, the server sends the multicast signals $B_1^{(1)} \oplus A_2^{(1)}$, $C_1^{(1)} \oplus A_3^{(1)}$, $C_2^{(1)} \oplus B_3^{(1)}$, and $C_{1,2}^{(1)} \oplus B_{1,3}^{(1)} \oplus A_{2,3}^{(1)}$. In layer $2$, we have a single user with no cache and a two-user system with file size $0.1 F$ bits and equal cache size $(M_2-M_1) F=0.1 N F$ bits. The server only needs to send a unicast signal of size $0.1 F$ bits to user $1$. In layer $3$, the $(M_3-M_2) F$ bits of the cache at user $3$ are not utilized. 

\item \underline{The caching scheme in \cite{daniel2017optimization}:} Each file is split into six disjoint subfiles, e.g., $A$ is divided into $A_1, \ \! A_2,\ \! A_3,$ $ A_{1,2}, \ \! A_{1,3}, \ \! A_{2,3}$, where $|A_i|=0.4F/3$, $|A_{1,2}|=0.1F/3$, $|A_{1,3}|=0.7F/3$, and $|A_{2,3}|=F/3$. In delivery phase, the server sends $B_1 \oplus A_2$, $C_1 \oplus A_3$, $C_2 \oplus B_3$, and $C_{1,2} \bar \oplus B_{1,3} \bar \oplus A_{2,3}$, where $\bar \oplus$ denotes an XOR operation that allows zero padding. Note that $C_{1,2} \bar \oplus B_{1,3} \bar \oplus A_{2,3}$ can be decomposed into $C_{1,2} \oplus B_{1,3}^\prime \oplus A_{2,3}^\prime$, $B_{1,3}^{\prime \prime} \oplus A_{2,3}^{\prime \prime} $, and the unicast signal $A_{2,3}^{\prime \prime \prime} $, where $|B_{1,3}^\prime| \! = \! |A_{2,3}^\prime| \! = \! |C_{1,2}|$, $|B_{1,3}^{\prime \prime}| \! = \! |A_{2,3}^{\prime \prime}| \! = \! |B_{1,3}|\! - \! |C_{1,2}|$, and $|A_{2,3}^{\prime \prime \prime} | \! = \! |A_{2,3}| \! - \! |B_{1,3}|$.

\item \underline{Our proposed scheme:} In the placement phase, each file is split into five disjoint subfiles, e.g., $A$ is divided into $A_1, \ \! A_2,\ \! A_3, \ \! A_{1,3}, \ \! A_{2,3}$, where $|A_1|=|A_{1,3}|=0.2 F$, $|A_2|=|A_3|=0.1 F$, and $|A_{2,3}|=0.4F$. First, the server partitions $A_{2,3}$ into $A_{2,3}^\prime, \ \! A_{2,3}^{\prime \prime}$ such that $|A_{2,3}^\prime|=0.3F$ and $|A_{2,3}^{\prime \prime}|=0.1F$. Then, the server sends the multicast signals $ \big( B_1 \cup B_{1,3} \big) \oplus \big( A_2 \cup A_{2,3}^{\prime} \big)$, $C_1 \oplus \big( A_3 \cup A_{2,3}^{\prime \prime} \big)$, and $ C_2 \oplus B_3$. One can easily verify that these multicast signals enable the users to decode the requested files. The caching scheme is illustrated in Fig. \ref{fig_motv}.
\end{enumerate}

Our caching scheme achieves a delivery load equal to $ 0.7$, compared to $0.8$ by the layered scheme \cite{yang2018coded,sengupta2016layered}, and $ 0.7333$ by the scheme in \cite{daniel2017optimization}. The schemes in \cite{yang2018coded,sengupta2016layered,daniel2017optimization} need an additional unicast transmission compared with our scheme, as we have better utilization of side-information, e.g., $A_{2,3}^\prime$ is used in the multicast signal to users $\{1,2\}$. Additionally, in this example, the layered scheme does not utilize $(M_3-M_2) F$ bits of the cache at user $3$. In Theorem \ref{thm_3ue}, we show that our proposed scheme is optimal with uncoded placement. 

\vspace{-0.1in}
\section{Cache Placement Phase }\label{sec_cach_plac}
%
Each file $W_{l}$ is partitioned into $2^K$ subfiles. A subfile $\tilde W_{l,\mc S} $ is labeled by the set of users $\mc S$ exclusively storing it. The set of uncoded placement schemes for a given $\bm m$ is defined as
\begin{align}\label{eqn_feas_alloc}
 \mathfrak A(\bm m)= \bigg\lbrace \bm a \in [0,1]^{2^K} \bigg\vert \sum\limits_{\mc S \subset [K]}  a_{\mc S }=1,  \! \! \! \! \! \! \! \! \sum\limits_{\mc S \subset [K]  : \ \! k \in \mc S } \! \! \! \! \! \! \! a_{\mc S } \leq m_k, \forall k \in [K] \bigg\rbrace,
\end{align}
where $\bm a $ is the vector of allocation variables $a_{\mc S}, \ \mc S \subset [K] $ and $|\tilde W_{l,\mc S}|= a_{\mc S} F \text{ bits}, \forall l \in [N]$. For example, for $K=3$, we have
\vspace{-0.2in}
\begin{align}
a_{\phi}+a_{\{1\}}+a_{\{2\}}+a_{\{3\}}+ a_{\{1,2\}} +a_{\{1,3\}}+a_{\{2,3\}}+a_{\{1,2,3\}}= 1, \label{eqn_place_cond1} \\
a_{\{i\}}+ a_{\{i,j\}}+a_{\{i,k\}}+a_{\{i,j,k\}}\leq m_i, \ i,j,k \in \{1,2,3\}, \ i \neq j \neq k.\label{eqn_place_cond2}
\end{align}
%

\vspace{-.2 in}
\section{Delivery Phase}\label{sec_cach_dlv}

\subsection{Multicast signals $X_{\mc T, \bm d}$}
A multicast signal $ X_{\mc T, \bm d}$ delivers a piece of the file $W_{d_j}$, $W_{d_j}^{\mc T}$, to user $j \in \mc T$. The server generates $ X_{\mc T, \bm d}$ by XORing $ W_{d_j}^{\mc T}, \ \forall j\in\mc T$, where $|W_{d_j}^{\mc T}|=v_{\mc T} F$ bits, $\forall j \in \mc T$. Each user in $\mc T\setminus \{j \}$ must be able to cancel $W_{d_j}^{\mc T}$ from $X_{\mc T, \bm d}$, in order to decode its requested piece. Consequently, $W_{d_j}^{\mc T}$ is constructed using the side-information cached by all the users in $\mc T\setminus \{j \}$ and not available at user $j$:
\vspace{-0.3in}
\begin{align}\label{eqn_KUE_mult}
X_{\mc T, \bm d}= \oplus_{j \in \mc T} \ W_{d_j}^{\mc T}= \oplus_{j \in \mc T} \bigg( \bigcup_{\mc S \in \mc B_j^{\mc T }} W_{d_j,\mc S}^{\mc T} \bigg),
\end{align}
where $W_{d_j,\mc S}^{\mc T} \subset W_{d_j}^{\mc T}$ which is stored exclusively at the users in the set $\mc S$ and 
\begin{align}\label{eqn_B_T_j}
 \mc B^{\mc T}_{j} \triangleq \Big\{ \mc S \subset [K] :  \mc T \! \setminus \! \{j\} \subset \mc S, j \not\in \mc S \Big\}, \! \ \forall j \in \mc T,
\end{align}
for example, for $K=3$ and $i,j,k \in \{1,2,3\}, \ i \neq j \neq k$, the multicast signals are defined as
\begin{align} 
 X_{\{i,j\}, \bm d}   &=   W_{d_i}^{\{i,j\}} \oplus \ W_{d_j}^{\{i,j\}} = \left(
W_{d_i,\{j\}}^{\{i,j\}}   \bigcup W_{d_i,\{j,k\}}^{\{i,j\}} \right) \oplus \left( W_{d_j,\{i\}}^{\{i,j\}}   \bigcup W_{d_j,\{i,k\}}^{\{i,j\}} \right), \label{eqn_3UE_mult1} \\
X_{\{1,2,3\}, \bm d}&= W_{d_1}^{\{1,2,3\}} \oplus \ W_{d_2}^{\{1,2,3\}} \oplus \ W_{d_3}^{\{1,2,3\}}=  W_{d_1,\{2,3\}}^{\{1,2,3\}} \oplus W_{d_2,\{1,3\}}^{\{1,2,3\}} \oplus W_{d_3,\{1,2\}}^{\{1,2,3\}}. \label{eqn_3UE_mult2}
\end{align}
where $|W_{d_i}^{\{i,j\}}| = |W_{d_j}^{\{i,j\}}|= v_{\{i,j\}} F $ bits and $|W_{d_1}^{\{1,2,3\}}| = |W_{d_2}^{\{1,2,3\}}|= |W_{d_3}^{\{1,2,3\}}|= v_{\{1,2,3\}} F $. $| W_{d_j,\mc S}^{\mc T}|= u^{\mc T}_{\mc S} F$ bits, i.e., the assignment variable $ u^{\mc T}_{\mc S} \in [0,a_{\mc S}]$ represents the fraction of $\tilde W_{d_j,\mc S}$ involved in the multicast signal $X_{\mc T, \bm d}$. Note that one subfile can contribute to multiple multicast transmissions, for example in a three-user system $\tilde W_{d_k,\mc \{i,j\}}$ is used in $X_{\{i,k\}, \bm d}$, $X_{\{j,k\}, \bm d}$, $X_{\{i,j,k\}, \bm d}$. Therefore, in order to guarantee that no redundant bits are transmitted, each subfile $\tilde W_{d_k,\mc S}$ is partitioned into disjoint pieces, e.g., for $K=3$, we have
\vspace{-0.1in}
\begin{align}
\tilde W_{d_k,\mc \{i,j\}} &=  W_{d_k,\mc \{i,j\}}^{\{i,k\}} \bigcup W_{d_k,\mc \{i,j\}}^{\{j,k\}} \bigcup W_{d_k,\mc \{i,j\}}^{\{i,j,k\}} \bigcup  W_{d_k,\mc \{i,j\}}^{\phi}, \label{eqn_3UE_prtn1} 
\end{align}
where $W_{d_k,\mc S}^{\phi}$ denotes the remaining piece which is not involved in any transmission. 
\begin{remark} By contrast with \cite{maddah2014fundamental,yang2018coded,sengupta2016layered,daniel2017optimization}, where multicast signals of size $t+1$ utilize only the side-information stored exclusively at $t$ users, i.e., $ X_{\mc T,\bm d}=  \oplus_{k \in \mc T} W^{\mc T}_{d_k,\mc T \setminus \{k\}}$, the structure of the multicast signal in (\ref{eqn_KUE_mult}) represents all feasible utilizations of the side-information. This flexibility is instrumental in achieving the delivery load memory trade-off with uncoded placement $R^*_{\mathfrak{A}}$. 
\end{remark}
\vspace{-0.2in}
\subsection{Unicast signals $X_{\{ i \}}$}
A unicast signal $X_{\{ i \}}$ delivers the fraction of the requested file which is not stored at user $i$ and will not be delivered by the multicast transmissions. For example, for $K=3$, we have 
\begin{align}
X_{\{i\}, \bm d}&= W_{d_i} \! \setminus  \Big(   \bigcup_{\mc S : i \in \mc S} \! \! \! \tilde W_{d_i,\mc S} \ \! \bigcup   W_{d_i}^{\{i,j\}}  \bigcup W_{d_i}^{\{i,k\}} \bigcup W_{d_i}^{\{i,j,k\}} \Big), \ \! i,j,k \in \{1,2,3\}, \ i \neq j \neq k\label{eqn_3UE_uni} 
\end{align}
where $\bigcup_{\mc S : i \in \mc S} \tilde W_{d_i,\mc S} $ is stored at user $i$ and $ W_{d_i}^{\mc T} $ is delivered to user $i$ via $X_{\mc T, \bm d}$.

\vspace{-0.2in}
\subsection{Delivery phase constraints} Recall that $v_{\mathcal{T}} \in [0, \ \! 1]$ and  $ u^{\mc T}_{\mc S} \in [0,a_{\mc S}]$ represent $|X_{\mc T, \bm d}|/F$, and $| W_{d_j,\mc S}^{\mc T}|/F$, respectively. Our delivery scheme can be represented by constraints on $v_{\mathcal{T}}$ and $u^{\mc T}_{\mc S}$ as follows.
First, the structure of the multicast signals in (\ref{eqn_3UE_mult1}), (\ref{eqn_3UE_mult2}) imposes 
\begin{align}\label{eqn_KUE_struct}
\sum_{\mc S \in \mc B_j^{\mc T } } u^{\mc T}_{\mc S} = v_{\mc T}, \ \forall \ \! \mc T \subsetneq_{\phi} [K], \ \forall \! \  j \in \mc T.
\end{align}
For example, for $K=3$, we have
\begin{align}
v_{\{i,j\}} &= u^{\{i,j\}}_{\{j\}}+u^{\{i,j\}}_{\{j,k\}} = u^{\{i,j\}}_{\{i\}}+u^{\{i,j\}}_{\{i,k\}}, \  
v_{\{1,2,3\}} = u^{\{1,2,3\}}_{\{2,3\}}= u^{\{1,2,3\}}_{\{1,3\}}= u^{\{1,2,3\}}_{\{1,2\}}. \label{eqn_3UE_delv4}
\end{align}
In order to prevent transmitting redundant bits from the subfile $\tilde W_{d_j,\mc S}$ to user $j$, we need 
\begin{align}\label{eqn_KUE_redund}
\sum\limits_{\mc T \subsetneq_{\phi} [K] : \ \! j \in \mc T, \mc T \cap \mc S \neq \phi,  \mc T \setminus \{j\} \subset \mc S } \! \! \! \! \! \! \! \! \! \! \!  u^{\mc T }_{\mc S} \leq a_{\mc S}, \! \ \forall \! \  j \not\in \mc S, \ \forall \! \  \mc S \in \left\lbrace \tilde{\mc S} \subset [K]: \ 2 \leq |\tilde{\mc S}| \leq K-1 \right\rbrace,
\end{align}
 
where the condition $\mc T \setminus \{j\} \subset \mc S $ follows from (\ref{eqn_B_T_j}). 

For example, for $K=3$, (\ref{eqn_KUE_redund}) implies
\begin{align}
u^{\{i,k\}}_{\{i,j\}}+u^{\{j,k\}}_{\{i,j\}}+u^{\{i,j,k\}}_{\{i,j\}} &\leq a_{\{i,j\}}. \label{eqn_3UE_delv5} 
\end{align} 
Finally, the delivery signals sent by the server must complete all the requested files:
\begin{align}\label{eqn_KUE_delv}
\sum\limits_{\mc T \subsetneq_{\phi} [K] : \ \! k \in \mc T } v_{\mc T }  \geq 1-\! \! \! \! \sum_{\mc S \subset [K] : \ \! k \in \mc S} \! \! \! \! a_{\mc S}, \forall \! \  k \in [K],
\end{align}
for example, for $K=3$, the delivery completion constraint for user $i$ is given by
\begin{align}
v_{\{i\}}+ v_{\{i,j\}}+v_{\{i,k\}}+v_{\{i,j,k\}}\geq 1-\! (a_{\{i\}}+ a_{\{i,j\}}+a_{\{i,k\}}+a_{\{i,j,k\}}). \label{eqn_3UE_delv14}
\end{align}
%

Therefore, for given $\bm a$, the set of feasible delivery schemes, $\mathfrak D (\bm a)$, is defined as
\begin{align}\label{eqn_feas_delv}
\mathfrak D (\bm a) \! =  \Bigg\lbrace (\bm v, \bm u) &\bigg\vert \sum\limits_{\mc T \subsetneq_{\phi} [K] : \ \! k \in \mc T } \! \! \! \! \! \! \! \! \! \! v_{\mc T } \geq 1-\! \! \! \! \! \! \! \! \sum_{\mc S \subset [K] : \ \! k \in \mc S} \! \! \! \! \! \! \! \! a_{\mc S}, \ \! \forall k \in [K], \sum_{\mc S \in \mc B_j^{\mc T } } \! \! u^{\mc T}_{\mc S} = v_{\mc T}, \ \forall \mc T \subsetneq_{\phi} [K], \ \! \forall j \in \mc T,   \nonumber \\ &\sum\limits_{\mc T \subsetneq_{\phi} [K] : \ \! j \in \mc T, \mc T \cap \mc S \neq \phi, \mc T \setminus \{j\} \subset \mc S } \! \! \! \! \! \! \! \! \! u^{\mc T }_{\mc S} \leq a_{\mc S}, \ \forall j \not\in \mc S, \forall \mc S \in \left\lbrace \tilde{\mc S} \subset [K]: \ 2 \leq |\tilde{\mc S}| \leq K-1 \right\rbrace, \nonumber \\ &0 \leq u_{\mc S}^{\mc T} \leq a_{\mc S}, \! \ \forall \mc T \subsetneq_{\phi} [K], \ \forall \mc S \in \bigcup\limits_{j \in \mc T} \mc B^{\mc T}_{j} \Bigg\rbrace,
\end{align}
where the transmission and assignment variables are represented by $\bm v$ and $\bm u$ respectively. 
%
%
%

\vspace{-0.15in}
\subsection{Discussion}
\vspace{-0.05in}
The linear constraints in (\ref{eqn_feas_delv}) guarantee the delivery of the requested files. Successful delivery is guaranteed by 
\begin{enumerate*}
\item By (\ref{eqn_KUE_struct}), user $j \in \mc T$ can retrieve $W_{d_j}^{\mc T} $ from the signal $X_{\mc T, \bm d}$.
\item By (\ref{eqn_KUE_redund}) and (\ref{eqn_KUE_delv}), $W_{d_j} $ can be reconstructed from the pieces decoded at user $j$. The delivery completion constraints ensure that the number of decoded bits are sufficient for decoding the file, and the redundancy constraints prevent the server from transmitting redundant bits.
\end{enumerate*} Formally, we have:
\begin{proposition}\label{prop_side_info}
For $\mc S^\prime \! \subset [K]$ such that $ 1 \leq |\mc S^\prime| \leq K \! - \! 2$, and some user $j \not\in \mc S^\prime$, the size of the multicast transmissions $X_{\mc T, \bm d}, $ where $ \{j\} \cup \mc S^\prime \subset \mc T,$ is limited by the amount of side-information stored at the users in $\mc S^\prime $ and not available at user $ j$, i.e., 
\begin{align}
\sum\limits_{\mc T \subsetneq_{\phi} [K] : \ \! \{j\} \cup \mc S^{\prime} \subset \mc T}  \! \! \! \! \!  \! \! \! \! v_{\mc T}  \leq  \! \! \!  \sum_{\mc S \subset [K] : \ \mc S^{\prime}  \subset \mc S, j \not\in \mc S} \! \! \! \! \! \! \! \! \! \! \! a_{\mathcal S}, 
\end{align}
which is guaranteed by (\ref{eqn_KUE_struct}) and (\ref{eqn_KUE_redund}).
\end{proposition}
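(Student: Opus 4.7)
The plan is to start from the left-hand side, expand each $v_{\mc T}$ using the structural constraint (\ref{eqn_KUE_struct}), swap the order of summation, and then bound the resulting inner sums using the redundancy constraint (\ref{eqn_KUE_redund}). Concretely, for every $\mc T$ with $\{j\}\cup \mc S' \subset \mc T$, applying (\ref{eqn_KUE_struct}) with user index $j$ (which is legal since $j \in \mc T$) gives $v_{\mc T} = \sum_{\mc S \in \mc B_j^{\mc T}} u^{\mc T}_{\mc S}$. Because $\mc T \setminus \{j\} \supset \mc S'$, every index set $\mc S \in \mc B_j^{\mc T}$ automatically satisfies $\mc S' \subset \mc S$ and $j \notin \mc S$, so only subfile labels from the target family on the right-hand side appear.

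Next I would interchange the order of summation, writing
\begin{align*}
\sum_{\mc T : \{j\}\cup\mc S'\subset \mc T} v_{\mc T} \;=\; \sum_{\mc S : \mc S'\subset \mc S,\, j \notin \mc S} \;\;\sum_{\mc T \in \mathcal I(\mc S)} u^{\mc T}_{\mc S},
\end{align*}
where $\mathcal I(\mc S) \triangleq \{\mc T \subsetneq_\phi [K] : j\in \mc T,\, \mc T\setminus\{j\}\subset \mc S,\, \{j\}\cup\mc S'\subset \mc T\}$. I would then show that for each fixed $\mc S$ the inner sum is bounded by $a_{\mc S}$. When $2\leq |\mc S|\leq K-1$, the index set $\mathcal I(\mc S)$ is contained in the collection of $\mc T$ appearing in (\ref{eqn_KUE_redund}); the extra condition $\mc T\cap \mc S\neq \phi$ required there holds because $\mc S' \subset \mc T\cap \mc S$ and $|\mc S'|\geq 1$. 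So the inner sum is dominated by the sum in (\ref{eqn_KUE_redund}), hence $\leq a_{\mc S}$.

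The only edge case is $|\mc S|=1$, which can arise precisely when $|\mc S'|=1$ and $\mc S=\mc S'$. In that situation the constraints $\mc S'\subset \mc T\setminus\{j\}\subset \mc S'$ force $\mc T=\{j\}\cup \mc S'$, so $\mathcal I(\mc S)$ is a singleton and the inner sum reduces to one variable $u^{\{j\}\cup\mc S'}_{\mc S'}$, which is bounded by $a_{\mc S'}$ directly by the box constraint $u^{\mc T}_{\mc S}\leq a_{\mc S}$ built into $\mathfrak D(\bm a)$. Summing the per-$\mc S$ bound over all $\mc S$ with $\mc S'\subset \mc S$ and $j\notin \mc S$ yields exactly the right-hand side.

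The main obstacle I anticipate is purely bookkeeping: carefully verifying that every $\mc T$ produced by the swap truly lies inside the index set appearing in (\ref{eqn_KUE_redund}), so that the bound can be applied term-by-term without double counting. The cardinality restriction $|\mc S'|\leq K-2$ in the hypothesis is what guarantees the existence of at least one admissible $\mc T$ (namely any superset of $\{j\}\cup \mc S'$ strictly contained in $[K]$ when needed), and keeping track of this, together with the $|\mc S|=1$ corner case, is the only delicate point; the remainder is a direct manipulation of the linear constraints defining $\mathfrak D(\bm a)$.
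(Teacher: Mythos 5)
Your proposal is correct and is essentially the paper's own argument run in the reverse direction: the paper starts from the right-hand side, peels off the $|\mc S|=1$ term via an indicator, applies (\ref{eqn_KUE_redund}) and the box constraint $u^{\mc T}_{\mc S}\leq a_{\mc S}$, swaps the summation order, and collapses the inner sums to $v_{\mc T}$ via (\ref{eqn_KUE_struct}), whereas you expand the left-hand side first and bound afterwards. The ingredients, the interchange of sums, and the handling of the $|\mc S|=1$ corner case are identical, so this is the same proof.
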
 \vspace{-0.1in}
The proof of Proposition \ref{prop_side_info} is provided in Appendix \ref{app_side_info}.

\vspace{-0.1in}
\section{Formulation and Results}\label{sec_cach_opt}
In this section, we first show that the optimal uncoded placement and linear delivery schemes can be obtained by solving a linear program. Next, we present a lower bound on the delivery load with uncoded placement. Based on this bound, we show that linear delivery is optimal with uncoded placement for three cases; namely, $\sum_{k=1}^{K} m_k \leq 1$, $\sum_{k=1}^{K} m_k \geq K\!-\!1$, and the three-user case. That is, for these cases we explicitly characterize the delivery load memory trade-off with uncoded placement $R^*_{\mathfrak A} (\bm m)$.  
\subsection{Caching Scheme Optimization}
In Sections \ref{sec_cach_plac} and \ref{sec_cach_dlv}, we have demonstrated that an uncoded placement scheme in $\mathfrak A$ is completely characterized by the allocation vector $\bm a$, which represents the fraction of files stored exclusively at each subset of users $\mc S \subset [K]$. Additionally, the assignment and transmission vectors $(\bm u, \bm v)$ completely characterize a delivery scheme in $\mathfrak D$, where $\bm v$ represents the size of the transmitted signals, and $\bm u$ determines the structure of the transmitted signals. For a given normalized memory vector $\bm m$, the following optimization problem characterizes the minimum worst-case delivery load $R^*_{\mathfrak{A},\mathfrak{D}}(\bm m) $ and the optimal caching scheme in $\mathfrak A, \mathfrak D$, i.e., the optimal values for $\bm a$, $\bm v$, and $\bm u$.
\begin{subequations} \label{eqn_opt1}
	\begin{align}
	\textit{\textbf{O\arabic{opt_ct}}:}  \qquad  & \min_{\bm a,\bm u ,\bm v}  
	& & \sum_{\mc T \subsetneq_{\phi} [K]} v_{\mc T} \\
	& \text{s.t.}
	& & \bm a \in \mathfrak{A}(\bm m), \text{ and } (\bm u, \bm v) \in \mathfrak D(\bm a),
	\end{align}
\end{subequations}
\stepcounter{opt_ct}%
where $\mathfrak{A}(\bm m) $ and $\mathfrak D(\bm a) $ are defined in (\ref{eqn_feas_alloc}) and (\ref{eqn_feas_delv}), respectively. 
\begin{remark} 
For equal cache sizes, $R^*_{\mathfrak{A},\mathfrak{D}}(\bm m)$ is equal to the worst-case delivery load of \cite{maddah2014fundamental}, which was shown to be optimal for uncoded placement in \cite{wan2016optimality} for $N \geq K$. For $N<K$, the optimal scheme for uncoded placement was proposed in \cite{yu2016exact}. The solution of (\ref{eqn_opt1}) is equivalent to the memory sharing solution proposed in \cite{maddah2014fundamental}.
\end{remark}
\begin{remark}\label{remark_padd}
In Section \ref{sec_cach_dlv}, $X_{\mc T, \bm d}$ is formed by XORing pieces of equal size. A delivery scheme with $\tilde X_{\mc T, \bm d}=  \bar \oplus_{j \in \mc T} \  W_{d_j}^{\mc T} $, where $\bar \oplus$ denotes an XOR operation that allows zero padding so that the pieces are of equal length, is equivalent to a delivery scheme in $\mathfrak{D} $ and both yield the same delivery load. For example, $\tilde X_{\{1,2\}, \bm d}=   W_{d_1, \{2\}}^{\{1,2\}}  \bar \oplus \ W_{d_2, \{1\}}^{\{1,2\}} $, with $ u_{\{2\}}^{\{1,2\}}>u_{\{1\}}^{\{1,2\}} $, is equivalent to a multicast signal $X_{\{1,2\}, \bm d}$ and a unicast signal $X_{\{2\}, \bm d}$, with sizes $ u_{\{1\}}^{\{1,2\}} F$ bits, and $ \big( u_{\{2\}}^{\{1,2\}}-u_{\{1\}}^{\{1,2\}} \big) F$ bits, respectively. 
\end{remark}
\vspace{-0.2in}

\begin{remark}\label{remark_subpack}
In this work, we assume the file size to be large enough, such that the cache placement and delivery schemes can be tailored to the unequal cache sizes by optimizing over continuous variables. More specifically, for $F$ large enough, $ u_{\mc S}^{\mc T} F $ can be used instead of $\lceil u_{\mc S}^{\mc T} F \rceil$ for $u_{\mc S}^{\mc T} \in[0,1]$. The required subpacketization level is the least-common-denominator of the assignment variables $ u_{\mc S}^{\mc T} $. That is, the minimum packet size is equal to the greatest-common-divisor (gcd) of all $u_{\mc S}^{\mc T}F$, assuming $u_{\mc S}^{\mc T}F$ are integers.
\end{remark}


\vspace{-0.2in}
\subsection{Lower Bounds}
Next, using the approach in \cite{wan2016optimality,yu2016exact}, we show that $R^*_{\mathfrak{A}} (\bm m)$ is lower bounded by the linear program in (\ref{eqn_bound_genie}).
\begin{theorem}\label{thm_bound_genie} (Uncoded placement bound) Given $K$, $N \geq K$, and $\bm m$, the minimum worst-case delivery load with uncoded placement $R^*_{\mathfrak{A}}(\bm m)$ is lower bounded by 
\begin{subequations} \label{eqn_bound_genie}
	\begin{align}
\textit{\textbf{O\arabic{opt_ct}}:}  \qquad  & \max_{\lambda_{0} \in \mathbb{R},\lambda_{k} \geq 0,\alpha_{\bm q} \geq 0}  
	& & -\lambda_{0} - \sum_{k=1}^{K} m_k \lambda_{k} \\
	& \ \ \ \ \ \ \text{s.t.}
	& &  \! \! \! \! \lambda_0+ \sum_{k \in \mc S} \lambda_k + \gamma_{\mc S} \geq 0, \ \! \forall \mc S \subset [K], \text{ and } \sum\limits_{\bm q \in  \mc P_{[K]} } \alpha_{\bm q} =1,
	\end{align}
\end{subequations}
\stepcounter{opt_ct}
where
\vspace{-0.3in} 
\begin{align}\label{eqn_gamma_S}
\gamma_{\mc S} \triangleq \begin{cases} K, \text{ for } \mc S =\phi,\\
0, \text{ for } \mc S =[K], \\
\sum\limits_{j=1}^{K-|\mc S|}  \sum\limits_{\substack{\bm q \in  \mc P_{[K]}   : \ \!  q_{j+1}  \in \mc S, \\ \{ q_1,\dots,q_{j} \} \cap \mc S=\phi } }  j \ \! \alpha_{\bm q}, \text{ otherwise}.
\end{cases}
\end{align}
and $\mc P_{[K]} $ is the set of all permutations of the vector $[1,2,\dots,K]$.
\end{theorem}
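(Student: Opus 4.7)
The plan is to follow the genie-based, information-theoretic framework of~\cite{wan2016optimality,yu2016exact}, extended to heterogeneous cache sizes by averaging per-permutation bounds with the weights $\alpha_{\bm q}$ and then invoking linear programming duality to match the form of the LP in~(\ref{eqn_bound_genie}).

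First, I would fix an arbitrary uncoded placement $\bm a \in \mathfrak{A}(\bm m)$ and, for each permutation $\bm q \in \mc P_{[K]}$, consider the worst-case demand in which user $q_i$ requests a distinct file $W_i$; since $N\geq K$, restricting attention to distinct-request demands is without loss of optimality in characterizing $R^*_{\mathfrak{A}}$. By the uncoded placement assumption, each subfile $\tilde W_{i,\mc S}$ is stored wholly at user $q_j$ if and only if $q_j \in \mc S$. Applying the entropy chain rule along the user order dictated by $\bm q$ to the joint entropy of the delivery signal and the caches, together with the decodability constraint $H(W_i\mid X_{\bm d},Z_{q_i})=0$, converts the resulting entropy terms into linear combinations of the subfile sizes $a_{\mc S}$ and yields a per-permutation lower bound of the form
$$
R \;\geq\; \sum_{\mc S \subset [K]} w^{\bm q}_{\mc S}\, a_{\mc S}, \qquad w^{\bm q}_{\mc S} \;=\; \min\{i:q_i\in \mc S\}-1,
$$
with the convention $w^{\bm q}_{\emptyset}=K$. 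Intuitively, $w^{\bm q}_{\mc S}$ counts the users appearing in the prefix of $\bm q$ before the first cacher of $\mc S$, who therefore must receive the corresponding pieces of $\tilde W_{\cdot,\mc S}$ through the delivery signal.

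Next, I would take a convex combination of these per-permutation bounds using nonnegative weights $\alpha_{\bm q}$ summing to one, which yields $R \geq \sum_{\mc S}\gamma_{\mc S}\,a_{\mc S}$ with $\gamma_{\mc S}$ precisely as in~(\ref{eqn_gamma_S}). Since this inequality must hold for every $\bm a\in\mathfrak{A}(\bm m)$, it follows that $R^*_{\mathfrak{A}}(\bm m) \geq \min_{\bm a\in\mathfrak{A}(\bm m)} \sum_{\mc S}\gamma_{\mc S}\,a_{\mc S}$. I would then LP-dualize this inner minimization: associating $\lambda_0\in\mathbb R$ with the equality $\sum_{\mc S}a_{\mc S}=1$ and $\lambda_k\geq 0$ with each constraint $\sum_{\mc S:k\in \mc S}a_{\mc S}\leq m_k$, stationarity in the nonnegative variable $a_{\mc S}$ yields exactly $\lambda_0+\sum_{k\in\mc S}\lambda_k+\gamma_{\mc S}\geq 0$, and the dual objective becomes $-\lambda_0-\sum_k m_k\lambda_k$. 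Because the original bound is tightest when $\bm\alpha$ is also chosen adversarially, jointly maximizing over $(\lambda_0,\lambda_k,\alpha_{\bm q})$ reproduces exactly the LP~(\ref{eqn_bound_genie}).

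The principal obstacle lies in the first step: establishing the per-permutation bound with the precise coefficient $w^{\bm q}_{\mc S}=\min\{i:q_i\in\mc S\}-1$. This requires carefully exploiting the uncoded placement property to decompose the relevant entropy terms into subfile-level contributions, and then matching each subfile to the unique permutation prefix in which it is first cached, so that exactly the subfiles whose first cacher sits at position $j+1$ contribute with weight $j$. Once this coefficient is pinned down, the permutation averaging and LP duality in the remaining steps are essentially mechanical.
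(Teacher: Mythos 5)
Your proposal is correct and follows essentially the same route as the paper: a per-permutation lower bound $R \geq \sum_{\mc S} w^{\bm q}_{\mc S} a_{\mc S}$ with $w^{\bm q}_{\mc S}=\min\{i:q_i\in\mc S\}-1$ (which is exactly the coefficient produced by the paper's genie-aided virtual-user bound $\sum_{j}\sum_{\mc S:\{q_1,\dots,q_j\}\cap\mc S=\phi}a_{\mc S}$ from \cite{yu2016exact}), followed by averaging with weights $\alpha_{\bm q}$ to obtain $\gamma_{\mc S}$, minimizing over $\bm a\in\mathfrak A(\bm m)$, and dualizing that inner LP. The only cosmetic difference is that the paper imports the per-permutation step by citing the acyclic index-coding/genie-aided bound rather than re-deriving it via the entropy chain rule as you sketch.
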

The proof of Theorem \ref{thm_bound_genie} is provided in Appendix \ref{app_thm_genie}. 
We compare the achievable delivery load $R^*_{\mathfrak{A},\mathfrak{D}}(\bm m) $ with the following lower bounds on the worst-case delivery load $R^* $. From \cite{amiri2017decentralized}, $R^* (\bm m)$ is lower bounded by 
\vspace{-0.1in}
\begin{align}\label{eqn_bound_amiri}
\max_{s \in [K], \ \! l \in \left[ \lceil \frac{N}{s} \rceil \right] } \Bigg\lbrace \! \frac{N-(N \! - \! K l)^{+}}{l}   -   \frac{s N \sum\limits_{i=1}^{s  +  \gamma} m_i  +\gamma (N \! - \! ls)^{+} }{l(s\! + \! \gamma)} \Bigg\rbrace,
\end{align}
where $\gamma  \! \triangleq   \min \! \left\lbrace\! \left(\lceil\frac{N}{l}\rceil-s\right)^{+}\!\!,K-s \right\rbrace$ and $m_1 \leq \dots \leq m_K$. The following proposition is a straightforward generalization of the lower bounds in \cite{wang2017improved} for systems with distinct cache sizes.
\begin{proposition} Given $K$, $N $, $\bm m$, and $m_1 \leq \dots \leq m_K$, we have 
\begin{align}\label{eqn_bound_wang}
R^*(\bm m) \geq \max \left\lbrace  \max_{s \in [\min\{K,N\}]} \left\lbrace s- \sum_{k=1}^{s}  \dfrac{N \sum_{i=1}^{k} m_i}{N-k+1} \right\rbrace, \! \ \max_{s \in [\min\{K,N\}]} \left\lbrace s \Big(1- \sum_{i=1}^{s} m_i \Big) \right\rbrace \right\rbrace.
\end{align}
\vspace{-0.2in}
\end{proposition}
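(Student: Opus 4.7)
My plan is to adapt the two converse arguments of \cite{wang2017improved} to handle unequal cache sizes. The common strategy is to order the users by cache size (using the assumption $m_1 \leq \dots \leq m_K$) and to invoke each inequality on the $s$ users with the smallest caches, since this choice will produce the tightest bound after a rearrangement-inequality check.

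For the simpler second bound $s(1-\sum_{i=1}^{s} m_i)$, I would use a Maddah-Ali--Niesen style cut-set inequality generalized to unequal caches. Fix $s \le \min\{K,N\}$ and restrict attention to users $1,\dots,s$. Construct $t=\lfloor N/s\rfloor$ rounds of demand vectors such that the $st$ requests across all rounds are all distinct files. Applying Fano's inequality to the joint reconstruction of these $st$ files gives
\[
s t F \;\le\; \sum_{i=1}^{s} H(Z_i) + t \cdot R^{*}(\bm m)\, F \;\le\; \Bigl( N \sum_{i=1}^{s} m_i + t R^{*}(\bm m) \Bigr) F,
\]
which rearranges to $R^{*}(\bm m) \geq s - (N/t)\sum_{i=1}^{s} m_i$. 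Choosing $t$ at the value $N/s$ (with the straightforward adjustment for the floor) yields the stated form.

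For the first bound, I would replay the improved converse of \cite{wang2017improved}, which in the equal-cache case $m_i=m$ reads $R^{*}(m) \ge s - Nm \sum_{k=1}^{s} k/(N-k+1)$. That proof uses a structured family of demand vectors together with submodularity of entropy (Han's inequality) to produce the coefficient $N/(N-k+1)$ on the $k$-th position, which is the inverse normalized conditional entropy of the $k$-th demanded file given the $k-1$ previously decoded ones. To generalize, I would re-run the same entropy chain but substitute the actual cumulative cache $\sum_{i=1}^{k} M_i = N \sum_{i=1}^{k} m_i$ for the quantity $kM$ that appears in the equal-cache accounting step; this is the only place cache equality is used.

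The main obstacle is purely bookkeeping: verifying that the entropy inequalities in \cite{wang2017improved} never use cache equality beyond the accounting step, and confirming by a rearrangement argument that placing the smallest caches in the earliest positions maximizes the bound. Since the weight $1/(N-k+1)$ is increasing in $k$ and the cumulative sum $\sum_{i=1}^{k} m_i$ is non-decreasing in $k$, assigning the smallest caches first minimizes the subtracted term and thus tightens the inequality. No new information-theoretic identities are needed, which justifies the proposition's \emph{straightforward generalization} claim.
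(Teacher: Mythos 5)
The paper gives no proof of this proposition at all --- it is asserted as a citation-level generalization of \cite{wang2017improved} --- so your attempt has to stand on its own. Your plan for the \emph{first} bound is the natural and, I believe, correct one: rerun the entropy chain of \cite{wang2017improved} for a fixed ordering of $s$ users, replace the accounting term $kM$ by $\sum_{i=1}^{k}M_i = N\sum_{i=1}^k m_i$, and note that the subtracted term equals $\sum_{i=1}^{s} m_{\sigma(i)} c_i$ with $c_i=\sum_{k=i}^{s}\tfrac{N}{N-k+1}$ decreasing in $i$, so by rearrangement the tightest bound puts the smallest caches in the earliest positions --- exactly the form stated.

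The genuine gap is in your second bound. The multi-round cut-set you describe, with $t=\lfloor N/s\rfloor$ rounds of distinct demands, gives $stF \le tR^{*}F+\sum_{i=1}^{s}M_iF$, i.e. $R^{*}\ge s-\tfrac{N}{\lfloor N/s\rfloor}\sum_{i=1}^{s}m_i$. Since $\lfloor N/s\rfloor\le N/s$, the coefficient $N/\lfloor N/s\rfloor$ is at least $s$, so this is \emph{weaker} than the claimed $s\big(1-\sum_{i=1}^{s}m_i\big)$ whenever $s\nmid N$ (e.g. $N=5$, $s=2$, $m_1+m_2=0.4$: the cut-set gives $1$, the proposition claims $1.2$); there is no ``straightforward adjustment for the floor'' that recovers the stated inequality, because the floor moves the bound in the wrong direction. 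The bound $s\big(1-\sum_{i=1}^{s}m_i\big)$ instead follows from a single-demand argument symmetrized over file subsets: for every $s$-subset $\mc D$ of files requested by users $1,\dots,s$, Fano gives $sF\le R^{*}F+I(W_{\mc D};Z_1,\dots,Z_s)+o(F)$, and averaging over all $\binom{N}{s}$ subsets together with Han's inequality yields $\tfrac{1}{\binom{N}{s}}\sum_{\mc D}I(W_{\mc D};Z_1,\dots,Z_s)\le \tfrac{s}{N}H(Z_1,\dots,Z_s)\le sF\sum_{i=1}^{s}m_i$. That averaging step --- not a Maddah-Ali--Niesen cut-set --- is the mechanism behind this term in \cite{wang2017improved}, and it is the piece your proposal is missing.
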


\vspace{-0.2in}
\subsection{Special Cases}
Next, we consider three special cases, for which we explicitly characterize $R^*_{\mathfrak{A}}(\bm m)$ and show that the solution of (\ref{eqn_opt1}) is the optimal caching scheme with uncoded placement. First, we consider the small cache regime, $\sum_{i=1}^{K} m_i  \leq 1$,
\begin{theorem}\label{thm_spchcase}
The minimum worst-case delivery load with uncoded placement is given by 
\begin{align}\label{eqn_thm_spchcase}
R^*_{\mathfrak{A}}(\bm m)= K - \sum_{j=1}^{K} (K-j+1) m_j,
\end{align}
for $m_1 \leq \dots \leq m_K$, $\sum_{i=1}^{K} m_i  \leq 1$, and $N \geq K$.
\end{theorem}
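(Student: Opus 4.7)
The plan is to prove Theorem \ref{thm_spchcase} by exhibiting a feasible point of the primal LP \textit{\textbf{O1}} and a feasible point of the dual LP \textit{\textbf{O2}} (from Theorem \ref{thm_bound_genie}) that both attain the value $K - \sum_{j=1}^K (K-j+1) m_j$. Since $R^*_{\mathfrak{A}}(\bm m) \leq R^*_{\mathfrak{A}, \mathfrak{D}}(\bm m)$ and Theorem \ref{thm_bound_genie} certifies $R^*_{\mathfrak{A}}(\bm m) \geq$ objective of \textit{\textbf{O2}}, matching the two values will pin down $R^*_{\mathfrak{A}}(\bm m)$ exactly.

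For achievability, the hypothesis $\sum_k m_k \leq 1$ leaves no memory budget to replicate content across users, so I would use the placement $a_{\phi} = 1 - \sum_k m_k$, $a_{\{k\}} = m_k$ for each $k \in [K]$, and $a_{\mc S} = 0$ whenever $|\mc S| \geq 2$; this satisfies (\ref{eqn_feas_alloc}) (each user's memory constraint is met with equality). Since no subfile is cached at two or more users, only pairwise multicasts can exploit side-information: for each pair $i<j$, user $j$ holds $\tilde W_{d_i,\{j\}}$ of size $m_j F$ and user $i$ holds $\tilde W_{d_j,\{i\}}$ of size $m_i F$, so I would form $X_{\{i,j\},\bm d}$ as the zero-padded XOR (Remark \ref{remark_padd}) of these two pieces, which costs $m_j F$ bits per pair, and deliver the uncached $(1-\sum_k m_k)F$ bits of $\tilde W_{d_k,\phi}$ to user $k$ via $X_{\{k\},\bm d}$. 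The total delivery load is
\begin{align*}
R = K\Bigl(1 - \sum_{k=1}^K m_k\Bigr) + \sum_{1 \leq i < j \leq K} m_j \;=\; K - \sum_{j=1}^K (K - j + 1)\, m_j,
\end{align*}
and Remark \ref{remark_padd} guarantees an equivalent strict counterpart in $\mathfrak{D}$, yielding a feasible $(\bm a, \bm u, \bm v)$ for (\ref{eqn_opt1}).

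For the converse, I would evaluate \textit{\textbf{O2}} at $\lambda_0 = -K$, $\lambda_k = K - k + 1$ for $k \in [K]$, and $\alpha_{\bm q} = 1$ concentrated on the identity permutation $\bm q = (1,2,\dots,K)$, which aligns with the sorted order $m_1 \leq \dots \leq m_K$. The objective evaluates to $-\lambda_0 - \sum_k m_k \lambda_k = K - \sum_{j=1}^K (K - j + 1) m_j$, matching achievability. The sign conditions on $\{\lambda_k\}$ and $\{\alpha_{\bm q}\}$, and $\sum \alpha_{\bm q} = 1$, are immediate. For the remaining constraints $\lambda_0 + \sum_{k \in \mc S}\lambda_k + \gamma_{\mc S} \geq 0$, only the identity permutation contributes to (\ref{eqn_gamma_S}), giving $\gamma_{\mc S} = \min \mc S - 1$ for nonempty $\mc S \subsetneq [K]$, while $\gamma_{\phi}=K$ and $\gamma_{[K]}=0$. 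Substituting, for $\mc S = \{k_1 < k_2 < \dots < k_s\}$ the inequality reduces to $(s-1)(K+1) \geq \sum_{i=2}^{s} k_i$, which holds since each $k_i \leq K$; the boundary cases $\mc S \in \{\phi, [K]\}$ are immediate.

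The main obstacle is the choice of dual variables. A uniform $\alpha$ on permutations would yield $\gamma_{\{k\}} = (K-1)/2$ independent of $k$, which is insufficient to certify $\lambda_k = K - k + 1$ for the smallest caches (largest $k$). The key insight is to concentrate all probability mass on the identity permutation, thereby aligning the dual with the sorted memory vector; this makes $\gamma_{\{k\}} = k-1$ and saturates the singleton constraints exactly. The remainder of the argument is elementary counting and a telescoping sum.
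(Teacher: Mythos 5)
Your proposal is correct and follows essentially the same route as the paper: the identical placement $a_{\{k\}}=m_k$, $a_\phi=1-\sum_k m_k$ with pairwise multicasts (your zero-padded XORs are equivalent, via Remark \ref{remark_padd}, to the paper's exact-size XORs $v_{\{i,j\}}=\min\{m_i,m_j\}$ plus unicasts), and the identical dual certificate $\alpha_{[1,\dots,K]}=1$, $\lambda_0=-K$, $\lambda_j=K-j+1$. Your explicit verification of the dual constraint $(s-1)(K+1)\geq\sum_{i=2}^{s}k_i$ for non-singleton sets is a welcome bit of extra care beyond the paper's check of the tight singleton constraints.
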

\begin{proof} \textbf{Achievability:} In the placement phase, each file is split into $K+1$ subfiles such that $a_{\{j\}}=m_j$ and $a_{\phi}=1-\sum_{k=1}^{K} m_k $. In the delivery phase, we have $v_{\{j\}}=1-\sum_{i=1}^{j-1} m_i - (K-j+1) m_j$ and $v_{\{i,j\}}=u^{{\{i,j\}}}_{\{i\}}= u^{{\{i,j\}}}_{\{j\}}=\min \{ a_{\{i\}},a_{\{j\}} \}$. In turn, $R_{\mathfrak{A},\mathfrak{D}}(\bm m) = K - \sum_{j=1}^{K} (K-j+1) m_j $ is achievable.
\textbf{Converse:} By substituting $\alpha_{[1,2,\dots,K]}=1$ in Theorem \ref{thm_bound_genie}, we get 
\begin{subequations} \label{eqn_thm2_conv1}
	\begin{align}
	& \max_{\lambda_{k} \geq 0,\lambda_{0}}  
	& &  -\lambda_0-\sum_{k=1}^{K} m_k \ \! \lambda_k \\
	& \text{s.t.}
	& & \lambda_0+ \sum_{k \in \mc S} \lambda_k + \gamma_{\mc S} \geq 0, \ \! \forall \mc S \subset [K],
	\end{align}
\end{subequations}
where $\gamma_{\mc S} = j\!-\!1$ if $\{j\} \in \mc S$ and $[j-1] \cap \mc S = \phi$ for $j \in [K]$. $\lambda_0=-K, \lambda_j=K-j+1$ is a feasible solution to (\ref{eqn_thm2_conv1}), since $\lambda_0+\lambda_j+(j-1)=0$. Therefore, $R^*_{\mathfrak{A}}(\bm m) \geq K - \sum_{j=1}^{K} (K-j+1) m_j$.
\end{proof}
Next theorem characterizes $R^*_{\mathfrak{A}}(\bm m)$ in the large total memory regime where $\sum_{i=1}^{K} m_i  \geq K\!-\!1$.
\begin{theorem}\label{thm_large_mem}
The minimum worst-case delivery load with uncoded placement is given by 
\begin{align}\label{eqn_thm_large_mem}
R^*_{\mathfrak{A}}(\bm m)=R^*(\bm m) = 1-m_1,
\end{align}
for $m_1 \leq \dots \leq m_K$, $\sum_{i=1}^{K} m_i  \geq K\!-\!1$, and $N \geq K$.  
\end{theorem}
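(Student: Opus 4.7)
The plan is to prove achievability and converse separately. The converse is immediate from the bound in (\ref{eqn_bound_wang}) specialized to $s=1$, which yields $R^*(\bm m)\geq 1-m_1$; combined with the trivial inequality $R^*_{\mathfrak A}(\bm m)\geq R^*(\bm m)$, this gives $R^*_{\mathfrak A}(\bm m)\geq 1-m_1$.

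For the achievability I exhibit an explicit $(\bm a,\bm v,\bm u)$ attaining load $1-m_1$. Guided by the intuition that when $\sum_j m_j\geq K-1$ the caches are nearly full, I support the placement only on subsets of size $K-1$ and $K$: set $a_{[K]\setminus\{j\}}=1-m_j$ for every $j\in[K]$, $a_{[K]}=\sum_{j=1}^K m_j-(K-1)$, and $a_{\mc S}=0$ otherwise. The hypothesis $\sum_j m_j\geq K-1$ is exactly what makes $a_{[K]}\geq 0$, while $\sum_{\mc S}a_{\mc S}=1$ and the per-user constraint in (\ref{eqn_feas_alloc}) hold with equality, so $\bm a\in\mathfrak A(\bm m)$. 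Under this placement the only subfile of $W_{d_k}$ missing from user $k$ is $\tilde W_{d_k,[K]\setminus\{k\}}$, of size $1-m_k$.

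For delivery I use the nested multicasts $X_{[i],\bm d}$ for $i=K,K-1,\ldots,2$ of sizes $v_{[i]}=m_{i+1}-m_i$ (with the convention $m_{K+1}\triangleq 1$), together with the unicast $X_{\{1\},\bm d}$ of size $v_{\{1\}}=m_2-m_1$. The active assignment variables are $u^{[i]}_{[K]\setminus\{k\}}=m_{i+1}-m_i$ for every $i\in[K]$ and every $k\in[i]$; each multicast is the zero-padded XOR (Remark~\ref{remark_padd}) of disjoint slices of the subfiles $\tilde W_{d_k,[K]\setminus\{k\}}$. Telescoping shows that user $k$ accumulates exactly $\sum_{i=k}^{K}(m_{i+1}-m_i)=1-m_k$ bits across the signals in which it participates, which covers its missing subfile, and the total load is $\sum_{i=1}^{K}(m_{i+1}-m_i)=1-m_1$.

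The main step, which is routine but must be checked carefully, is verifying $(\bm u,\bm v)\in\mathfrak D(\bm a)$ via (\ref{eqn_feas_delv}). The structure constraint (\ref{eqn_KUE_struct}) holds because within each $\mc B_j^{[i]}$ the only active $\mc S$ is $[K]\setminus\{j\}$, contributing $u^{[i]}_{[K]\setminus\{j\}}=v_{[i]}$. The completion constraint (\ref{eqn_KUE_delv}) is tight at every user by the telescoping above. The redundancy constraint (\ref{eqn_KUE_redund}) for $\mc S=[K]\setminus\{k\}$ (nontrivial only when $K\geq 3$) reduces to $\sum_{i=\max(k,2)}^{K}(m_{i+1}-m_i)\leq 1-m_k$, which is tight for $k\geq 2$ and reads $1-m_2\leq 1-m_1$ for $k=1$, holding since $m_1\leq m_2$. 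Finally, $0\leq u^{[i]}_{[K]\setminus\{k\}}\leq a_{[K]\setminus\{k\}}$ follows from $m_{i+1}-m_i\leq 1-m_i\leq 1-m_k$ whenever $i\geq k$. Hence $R^*_{\mathfrak A,\mathfrak D}(\bm m)\leq 1-m_1$, and combined with the converse this proves $R^*(\bm m)=R^*_{\mathfrak A}(\bm m)=1-m_1$.
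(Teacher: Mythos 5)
Your converse is identical to the paper's (set $s=1$ in (\ref{eqn_bound_wang}) and use $R^*_{\mathfrak A}\geq R^*$), and your placement is exactly the paper's ($a_{[K]\setminus\{i\}}=1-m_i$, $a_{[K]}=\sum_i m_i-(K-1)$). Where you genuinely diverge is the delivery construction. The paper splits into two cases according to whether $(K-2)m_1\geq\sum_{i=2}^K m_i-1$ and builds its signals on the sets $[K]\setminus\{i\}$ and $[K]$ (with an auxiliary chain $[1],\dots,[l]$ in the second case), so that every transmission is a multicast to at least $K-1$ users whenever possible. You instead use a single nested chain $X_{[K]},X_{[K-1]},\dots,X_{[1]}$ with the telescoping sizes $v_{[i]}=m_{i+1}-m_i$ (taking $m_{K+1}=1$), which needs no case analysis and whose feasibility check against (\ref{eqn_feas_delv}) is clean: the only active side-information set for user $j$ is $[K]\setminus\{j\}$, the completion constraints telescope to equality, and the redundancy constraint at $\mc S=[K]\setminus\{k\}$ is tight for $k\geq 2$ and reads $1-m_2\leq 1-m_1$ for $k=1$. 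Both schemes are valid feasible points of (\ref{eqn_opt1}) with objective $1-m_1$, so for the delivery-load metric of Theorem \ref{thm_large_mem} they are equivalent; the paper's choice of larger multicast groups would only matter for a metric that weights transmissions by the receiving set, such as the delivery completion time of Section \ref{sec_mem_opt}, whereas your construction trades that for uniformity and a shorter verification. I see no gap in your argument.
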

\begin{proof}
\textbf{Achievability:} In the placement phase, $W_{j}$ is partitioned into subfiles $  \tilde W_{j,[K]\setminus \{i\}}, i \in [K]$ and $\tilde W_{j,[K]} $, such that $a_{[K]}=\sum\limits_{i=1}^{K}m_i -(K\!-\!1)$ and $a_{[K]\setminus \{i\}}=1-m_i, \ \! i \in [K].$
In the delivery phase, we have the following cases.
\begin{itemize}[leftmargin=*]
\item For $(K\!-\!2)m_1 \geq \sum_{i=2}^{K}m_i-1$, we have the following multicast transmissions 
\vspace{-0.2in}
\begin{align}
X_{[K]\setminus \{i\}, \bm d}&= \oplus_{k \in [K]\setminus \{i\}} W^{[K]\setminus \{i\}}_{d_k,[K]\setminus \{k\}}, \ i \in \{2,\dots,K\}, \\
X_{[K], \bm d}&=\oplus_{k \in [K]} W^{ [K]}_{d_k,[K]\setminus \{k\}},
\end{align}
with $v_{[K]\setminus \{i\}}= m_i-m_1, \ i \in \{2,\dots,K\}, $ and $ v_{[K]}=1+(K-2)m_1- \sum\limits_{k=2}^{K}m_k$. 
\vspace{+0.05in}
\item For $(K \! - \! l- \! 1)m_l < \sum_{i=l+1}^{K} m_i \! - \! 1$ and $(K \! - \! l- \! 2)m_{l+1} \geq \sum_{i=l+2}^{K} m_i \! - \! 1$, where $ l \in [K \!- \!2]$, we have the following transmissions
\vspace{-0.2in}
\begin{align}
X_{[i], \bm d}&= \oplus_{k \in [i]} W^{[i]}_{d_k,[K]\setminus \{k\}}, \ i \in [l], \\
X_{[K]\setminus \{i\}, \bm d}&=\oplus_{k \in [K]\setminus \{i\}} W^{[K]\setminus \{i\}}_{d_k,[K]\setminus \{k\}}, \ \! i \in \{l\!+\!1,\dots,K\}.
\end{align}
with $v_{[i]} = m_{i+1}-m_i, \ \! i \in [l\!-\!1], $ $v_{[l]} = \dfrac{1}{K\!-\!l\!-\!1} \bigg( \sum_{j=l+1}^{K} m_j -1 -(K\!-\!l\!-\!1)m_l \bigg), $ and $v_{[K]\setminus \{i\}} = \dfrac{1}{K\!-\!l\!-\!1} \bigg( (K\!-\!l\!-\!1)m_i +1 -\sum_{j=l+1}^{K} m_j  \bigg), \ \! i \in \{l\!+\!1,\dots,K\}. $
\end{itemize}
In both cases, the size of the assignment variables satisfies $u_{[K]\setminus\{k\}}^{\mc T}=v_{\mc T}, \forall k \in \mc T $.

\textbf{Converse:} By substituting $s=1$ in (\ref{eqn_bound_wang}), we get $R^*(\bm m) \geq 1-m_1 $.
\end{proof}
In the next theorem, we characterize $R^*_{\mathfrak{A}}(\bm m)$ for $K=3$.
\begin{theorem}\label{thm_3ue}
For $K=3$, $N \geq 3$, and $m_1 \leq m_2 \leq m_3$, the minimum worst-case delivery load with uncoded placement 
\begin{align}\label{eqn_thm_3ue}
\! \! \! \! R^*_{\mathfrak{A}}(\bm m) \! =  \! \max \left\lbrace \! 3 \! - \! \sum_{j=1}^{3} (4-j) m_j, \ \! \frac{5}{3} \! - \! \sum_{j=1}^{3} \frac{(4-j) m_j}{3}, \ \! 2 \! - \! 2 m_1 - \! m_2, \ \! 1 \! - \! m_1 \! \right\rbrace \! .
\end{align}
In particular, we have the following regions
\begin{enumerate}
\item For $\sum\limits_{j=1}^{3} m_j \! \leq \! 1$, $R^*_{\mathfrak{A}}(\bm m) \! = \! 3 \! - \! \sum\limits_{j=1}^{3} (4-j) m_j. $
\item For $1 \! \leq \! \sum\limits_{j=1}^{3} m_j \! \leq \! 2$, we have three cases
\begin{itemize}
\item If $m_3 \! < \! m_2 \! + \! 3 m_1 \! - \! 1 $, and $2 m_2 \! + \! m_3 \! < \! 2$, then $R^*_{\mathfrak{A}}(\bm m) \! = \! \dfrac{5}{3} \! - \! \sum\limits_{j=1}^{3} \dfrac{(4-j) m_j}{3}. $
\item If $m_3 \! \geq \! m_2 \! + \! 3 m_1 \! - \! 1 $, and $m_1 \! + \! m_2 \! < \!1$, then  $R^*_{\mathfrak{A}}(\bm m) \! = 2 \! - \! 2 m_1 - \! m_2 $.
\item If $2 m_2 \! + \! m_3 \! \geq \! 2$, and $m_1 \! + \! m_2 \! \geq \! 1$, then $R^*_{\mathfrak{A}}(\bm m) \! = \! 1 \! - \! m_1$.
\end{itemize}
\item For $\sum\limits_{j=1}^{3} m_j \geq 2$, $R^*_{\mathfrak{A}}(\bm m) \! = \! 1 \! - \! m_1$.
\end{enumerate}
\vspace{-0.34in}
\end{theorem}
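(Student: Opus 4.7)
The plan is to prove the theorem in two matching halves: a converse lower bound on $R^*_{\mathfrak{A}}(\bm m)$ showing it is at least the indicated max, and an achievability argument constructing uncoded placement and linear delivery schemes that attain each of the four expressions in the region where it is active. Since $R^*_{\mathfrak{A},\mathfrak{D}}(\bm m) \geq R^*_{\mathfrak{A}}(\bm m)$ always, matching the two sides also certifies optimality of the scheme obtained from \textbf{O1}, as claimed in the preamble.

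For the converse, I would invoke Theorem \ref{thm_bound_genie} and exhibit, for each of the four candidate values, a feasible $(\bm \alpha, \lambda_0, \bm \lambda)$ whose objective equals that value. The first term $B_1 = 3 - 3m_1 - 2m_2 - m_3$ is recovered exactly as in Theorem \ref{thm_spchcase} by choosing $\alpha_{[1,2,3]}=1$ with $(\lambda_0, \lambda_1, \lambda_2, \lambda_3) = (-3, 3, 2, 1)$. Using the same single permutation but relaxing the dual variables to $(-2, 2, 1, 0)$ and $(-1, 1, 0, 0)$ yields $B_3 = 2 - 2m_1 - m_2$ and $B_4 = 1 - m_1$ respectively; one verifies directly from (\ref{eqn_gamma_S}) that $\gamma_{\{1\}}=0, \gamma_{\{2\}}=1, \gamma_{\{3\}}=2$ are the induced values, so every constraint in (\ref{eqn_bound_genie}) holds. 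The delicate case is $B_2 = \tfrac{5}{3} - m_1 - \tfrac{2}{3} m_2 - \tfrac{1}{3} m_3$; a single permutation does not suffice because the desired dual gives $\lambda_0 + \lambda_1 = -\tfrac{2}{3}$, forcing $\gamma_{\{1\}} \geq \tfrac{2}{3}$, and more generally the size-$1$ requirements $(\gamma_{\{1\}}, \gamma_{\{2\}}, \gamma_{\{3\}}) = (\tfrac{2}{3}, 1, \tfrac{4}{3})$ together with the size-$2$ requirements sum to 3 and 1 respectively, which pins down a unique convex combination of permutations. Solving these equations yields $\alpha_{[1,2,3]} = \alpha_{[1,3,2]} = \alpha_{[2,3,1]} = \tfrac{1}{3}$ with $(\lambda_0, \lambda_1, \lambda_2, \lambda_3) = (-\tfrac{5}{3}, 1, \tfrac{2}{3}, \tfrac{1}{3})$, and I would verify that every inequality in (\ref{eqn_bound_genie}) is satisfied.

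For the achievability side I would partition the region $\{m_1 \leq m_2 \leq m_3\}$ into the three regimes of the theorem. In Region~1 ($\sum m_j \leq 1$), apply the scheme of Theorem \ref{thm_spchcase}, which puts $a_{\{j\}} = m_j$ and $a_\phi = 1 - \sum m_j$, uses pairwise multicasts of size $\min(m_i,m_j) = m_i$, and yields exactly $B_1$. In Region~3 ($\sum m_j \geq 2$), apply the two-case scheme of Theorem \ref{thm_large_mem} to obtain $B_4 = 1 - m_1$. In the intermediate Region~2 the three sub-cases correspond to three different vertices of $\mathfrak{A}(\bm m) \times \mathfrak{D}(\bm a)$: sub-case~(2a) mirrors the motivational example of Section~\ref{sec_motv}, with each file split into five subfiles indexed by $\{1\},\{2\},\{3\},\{1,3\},\{2,3\}$, and linear multicasts of size~2 whose sizes are calibrated so that $v_{\{1,2\}} + v_{\{1,3\}} + v_{\{2,3\}} + v_{\{1,2,3\}}$ matches $B_2$; sub-case~(2b) uses a placement that assigns mass to $\{i,j\}$-subfiles in a different pattern and attains $B_3 = 2 - 2m_1 - m_2$; sub-case~(2c) is a mild variant of the Region~3 scheme attaining $B_4 = 1 - m_1$. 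In each sub-case I would check feasibility of $\bm a$ and $(\bm u, \bm v)$ via (\ref{eqn_feas_alloc}) and (\ref{eqn_feas_delv}) and then match $\sum_{\mc T} v_{\mc T}$ to the targeted bound.

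The hard part will be Region~2: establishing the precise sub-case boundaries $\{m_3 < m_2 + 3m_1 - 1, \, 2m_2 + m_3 < 2, \, m_1 + m_2 < 1\}$ requires checking which three of the six inequalities among $B_1,B_2,B_3,B_4$ coincide as we move among vertices of the LP, and constructing placement vectors $\bm a$ that simultaneously satisfy $\sum_\mc S a_\mc S = 1$, the cache-size constraints, and the tight redundancy constraints (\ref{eqn_KUE_redund}) for the delivery scheme that saturates the relevant bound. Identifying the convex combination of permutations that certifies $B_2$ in the dual (as above) is the other notable computation; once in hand, matching upper and lower bounds in every sub-region follows by direct substitution.
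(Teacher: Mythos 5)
Your proposal is correct and follows essentially the same route as the paper's Appendix C: the converse is obtained from the permutation-based uncoded-placement bound of Theorem \ref{thm_bound_genie} (the paper minimizes the primal $\sum_{\mc S}\gamma_{\mc S}a_{\mc S}$ over $\mathfrak A(\bm m)$ using exactly the single permutation $[1,2,3]$ for the $2-2m_1-m_2$ region and the uniform mixture of $[1,2,3],[1,3,2],[2,3,1]$ for the $5/3$ region, which is the primal counterpart of your dual certificates, and it invokes the cut-set bound for $1-m_1$ where you give an equivalent dual point), while achievability is by explicit region-by-region placement/delivery vectors saturating the relevant bound. The only caveat is that the explicit schemes for Regions II--IV (the paper's Tables I--III, which require several sub-cases and, in particular, a nonzero $a_{\{1,2\}}$ component in the interior of your sub-case (2a) rather than the five-subfile split of the motivational example) are left as a construction-and-verification task in your plan, but the approach you describe for producing them is the one the paper carries out.
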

Proof of Theorem \ref{thm_3ue} is provided in Appendix \ref{app_thm_3ue}.
\begin{remark} By substituting $m_3=1$ in Theorem \ref{thm_3ue}, we obtain the two-user delivery load memory trade-off with uncoded placement, given as $R^*_{\mathfrak{A}}(\bm m) \! =  \! \max \left\lbrace 2 \! - \! 2 m_1 - \! m_2, \ \! 1 \! - \! m_1 \! \right\rbrace \! . $
\end{remark}
\vspace{-0.2in}

\begin{remark} From the proposed schemes, we observe that the allocation variables satisfy 
\begin{align}
\sum_{\mc S \subset [K] : \ |\mc S|=t} a_{\mc S} = t+1-\sum_{i=1}^{K}m_i, \
\sum_{\mc S \subset [K] : \ |\mc S|=t+1} a_{\mc S} =\sum_{i=1}^{K}m_i-t,
\end{align}
for $ t<\sum_{i=1}^{K}m_i \leq t+1$ and $a_{\mc S}=0$ for $|\mc S| \not\in \{t,t+1\}$. That is, the proposed placement scheme generalizes the memory sharing scheme in \cite{maddah2014fundamental}, where $a_{\mc S}=a_{\mc S^\prime} $ for $|\mc S|=|\mc S^\prime|$.
\end{remark}

%
\vspace{-0.2in}
\subsection{Comparison with Other Schemes with Heterogeneous Cache Sizes}
Previous work includes the layered heterogeneous caching (LHC) scheme \cite{yang2018coded,sengupta2016layered}, where the users are divided into groups and within each group the users' cache memories are divided into layers such that the users in each layer have equal cache sizes. The Maddah-Ali--Niesen caching scheme \cite{maddah2014fundamental} is applied to the fraction of the file assigned to each layer. Let $R_{\text{LHC}}(\bm m) $ denote delivery load of this scheme. We have
%
\begin{proposition} 
Given $K, N \geq K$ and $\bm m$, we have $ R^*_{\mathfrak{A},\mathfrak{D}}(\bm m) \leq R_{\text{LHC}}(\bm m) $.
\end{proposition}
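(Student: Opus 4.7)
The plan is to exhibit the LHC scheme of \cite{yang2018coded,sengupta2016layered} as a particular feasible point of the linear program \textbf{O1} defined in (\ref{eqn_opt1}). Since \textbf{O1} minimizes the delivery load over all pairs $(\bm a, \bm u, \bm v)$ with $\bm a \in \mathfrak A(\bm m)$ and $(\bm u,\bm v) \in \mathfrak D(\bm a)$, any feasible point yields an upper bound on $R^*_{\mathfrak A,\mathfrak D}(\bm m)$. It therefore suffices to translate LHC into our variables and verify feasibility.

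First, I would unfold the LHC parameters into our notation. The scheme partitions the users into groups and then stratifies each group's cache memory into layers indexed by $\ell$; let $\mc U_\ell \subset [K]$ denote the users active in layer $\ell$, let $\beta_\ell$ denote the normalized per-file size allocated to that layer, and let $t_\ell$ denote the Maddah-Ali--Niesen (MAN) caching parameter for the layer. Within layer $\ell$ the $\beta_\ell$-fraction of each file is split into $\binom{|\mc U_\ell|}{t_\ell}$ equal pieces, one stored at every $t_\ell$-subset of $\mc U_\ell$. I then aggregate these placements into a single allocation vector $\bm a$ by setting $a_{\mc S}$ equal to the sum of $\beta_\ell/\binom{|\mc U_\ell|}{t_\ell}$ over every layer $\ell$ with $\mc S \subset \mc U_\ell$ and $|\mc S|=t_\ell$, and absorbing any unallocated fraction into $a_\phi$. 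Because LHC respects each user's cache budget by construction, $\bm a\in\mathfrak A(\bm m)$ follows directly.

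Second, I would translate the MAN delivery signals of each layer into a choice of $(\bm u,\bm v)$. A MAN multicast signal in layer $\ell$ has the form $\oplus_{j\in\mc T}\tilde W^{(\ell)}_{d_j,\mc T\setminus\{j\}}$ with $\mc T\subset\mc U_\ell$ and $|\mc T|=t_\ell+1$, which is the special case of the general multicast in (\ref{eqn_KUE_mult}) in which the only active index in $\mc B^{\mc T}_j$ is $\mc T\setminus\{j\}$. I would therefore set $u^{\mc T}_{\mc S}=\beta_\ell/\binom{|\mc U_\ell|}{t_\ell}$ whenever $\mc S=\mc T\setminus\{j\}$ for some $j\in\mc T$ and the signal originates from a layer $\ell$ with $\mc T \subset \mc U_\ell$ and $|\mc T|=t_\ell+1$, setting all other assignment variables for this $\mc T$ to zero, and accumulating contributions across layers. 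Unicast residuals left by LHC (e.g., memory that is not exploited in a multicast transmission) are captured by the corresponding $v_{\{k\}}$. The structure constraint (\ref{eqn_KUE_struct}) then holds trivially since a single term in $\mc B^{\mc T}_j$ carries the full mass of each MAN transmission; the redundancy constraint (\ref{eqn_KUE_redund}) holds because within each layer a given subfile is used in exactly one MAN multicast and different layers place and deliver disjoint pieces of the file; and the file-completion constraint (\ref{eqn_KUE_delv}) holds because MAN within each layer delivers the entire $\beta_\ell$-fraction of every requested file, with unicast top-ups covering the remainder.

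Summing $v_{\mc T}$ over $\mc T \subsetneq_\phi [K]$ in this construction reproduces $R_{\text{LHC}}(\bm m)$ exactly, so the minimum of \textbf{O1} satisfies $R^*_{\mathfrak A,\mathfrak D}(\bm m)\leq R_{\text{LHC}}(\bm m)$. The main obstacle is purely bookkeeping in the second step once the LHC refinement of \cite{sengupta2016layered} introduces user grouping on top of layering, because groups add a combinatorial partition that must be tracked when accumulating $u^{\mc T}_{\mc S}$ and $v_{\mc T}$ across (layer, group) pairs; however the MAN signals in each such pair remain a special case of (\ref{eqn_KUE_mult}) restricted to intra-group subsets, so the feasibility check in (\ref{eqn_feas_delv}) carries through with no new ideas.
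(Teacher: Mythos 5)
Your proposal is correct and follows essentially the same route as the paper: both exhibit the LHC scheme as a feasible (but not necessarily optimal) point of the linear program \textbf{O1} by decomposing the variables $\bm a$, $\bm v$, $\bm u$ into per-layer (and per-group) contributions, verifying the constraints in $\mathfrak A(\bm m)$ and $\mathfrak D(\bm a)$ layer by layer, and concluding that the LP minimum is at most $R_{\text{LHC}}(\bm m)$. Your writeup is somewhat more explicit about the Maddah-Ali--Niesen structure within each layer (only $\mc S=\mc T\setminus\{j\}$ active in $\mc B^{\mc T}_j$), but the argument is the same.
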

\begin{proof} LHC scheme is a feasible (but not necessarily optimal) solution to (\ref{eqn_opt1}) shown as follows.
\underline{Grouping:} Dividing the users into disjoint groups can be represented in the placement phase by setting $a_{\mc S}=0$ for any set $\mc S$ containing two or more users from different groups. Similarly, in the delivery phase $v_{\mc T}=0$ if $\{q_1,q_2\} \subset \mc T$, $q_1$ and $q_2$ belong to distinct groups.
\underline{Layering:} Without loss of generality, assume there is one group, i.e. there are $K$ layers. Let $\alpha_l$ be the fraction of the file considered in layer $l$, and assign $a_{\mc S, l}$, $v_{\mc T, l}$, and $u^{\mc T}_{\mc S, l}$ to layer $l$, such that $a_{\mc S} =\sum\limits_{l=1}^{K} a_{\mc S, l} $, $v_{\mc T} =\sum\limits_{l=1}^{K} v_{\mc T, l} $, and $u^{\mc T}_{\mc S} =\sum\limits_{l=1}^{K} u^{\mc T}_{\mc S, l}$. Additionally, we have $\sum\limits_{\mc S \subset \{l,\dots,K\}}  \! \! \! \!   a_{\mc S,l } = \alpha_l  $, and $ \sum\limits_{\mc S \subset \{l,\dots,K\}  : \ \! \{k\} \in \mc S}    \! \! \! \! \! \! \! \! a_{\mc S,l } + \! \! \! \sum\limits_{\mc T \subset \{l,\dots,K\} : \ \! \{k\} \in \mc T} \! \! \! \! \! \!  \! \! v_{\mc T,l } \geq \alpha_l$ for $k \in \{l,\dots,K\}$. Thus, the LHC scheme is a feasible solution to (\ref{eqn_opt1}).
\end{proof}
The recent reference \cite{daniel2017optimization} has proposed optimizing over uncoded placement schemes $\mathfrak{A}$ with the decentralized delivery scheme in \cite{maddah2015decentralized}, i.e., the multicast signals are defined as $ X_{\mc T,\bm d}= \bar \oplus_{k \in \mc T} \tilde W_{d_k,\mc T \setminus \{k\}}$ where $v_{\mc T}= \max\limits_{k \in \mc T} a_{\mc T \setminus \{k\}}$, which limits the multicast opportunities \cite{ramakrishnan2015efficient} as illustrated in Section \ref{sec_motv}.

\begin{remark} For fixed cache contents, reference \cite{ramakrishnan2015efficient} proposed a procedure for redesigning the multicast signals, formed by XORing pieces of unequal size, in order to better utilize the side-information stored at the users. In contrast, our scheme is a centralized scheme, where we jointly optimize the cache contents and the delivery procedure which allows flexible utilization of the side-information at the users. 
\end{remark}


Different from \cite{yang2018coded,sengupta2016layered,daniel2017optimization}, we propose a more general delivery scheme that allows flexible utilization of the side-information. Both our solution and that of \cite{daniel2017optimization} is exponential in the number of users. Notably, for systems with only two distinct cache sizes over all users, reference \cite{daniel2017optimization} has a caching scheme which is obtained by solving an optimization problem with polynomial complexity.
\vspace{-0.1in}
\section{Optimizing Cache Sizes with Total Memory Budget}\label{sec_mem_opt}
In this section, we consider a centralized system where the server is connected to the $K$ users via rate limited download links of distinct capacities, as described by Fig.$\!  $ \subref*{fig_sys_model_gen}. 
\vspace{-0.2in}
\subsection{Problem Formulation}

Next, we consider the joint optimization of both the caching scheme and the users' cache sizes for given download rates $\bm C$ (see Fig.$\! $ \subref*{fig_sys_model_gen}) and normalized cache budget $m_{\text{tot}}$. More specifically, the minimum worst-case DCT, $\Theta^{*}_{\mathfrak A, \mathfrak D}(m_{\text{tot}},\bm C)$, is characterized by 
\begin{subequations} \label{eqn_opt2}
\begin{align}
\textit{\textbf{O\arabic{opt_ct}}:}  \qquad  & \min_{\bm a,\bm u ,\bm v, \bm m}  
& &  \sum_{\mc T \subsetneq_{\phi} [K]} \frac{v_{\mc T}}{\min\limits_{j \in \mc T} C_j}   \\
& \ \ \text{s.t.}
& & \bm a \in \mathfrak{A}(\bm m), \ \! (\bm v, \bm u) \in \mathfrak D(\bm a), \label{eqn_O2_cnst2}\\
& & &  \sum_{k=1}^{K} m_k \leq m_{\text{tot}}, \ \! 0 \leq m_k  \leq 1, \ \forall \ \! k \in [K],
\end{align}
\end{subequations}
\stepcounter{opt_ct}
where $\mathfrak{A}(\bm m) $ is defined in (\ref{eqn_feas_alloc}) and $\mathfrak D(\bm a) $ is defined in (\ref{eqn_feas_delv}).

\vspace{-0.2in}
\subsection{Optimal Cache Sizes}

The linear program in (\ref{eqn_opt2}) characterizes the optimal memory allocation assuming uncoded placement and linear delivery schemes. For the case where $m_{\text{tot}} \leq 1$, we are able to derive a closed form expression for the optimal memory allocation, and show that the optimal solution balances between allocating larger cache memories to users with low decoding rates and uniform memory allocation. In particular, the cache memory budget $m_{\text{tot}}$ is allocated uniformly between users $\{1,\dots, q\}$, where $q$ is determined by $\bm C$ as illustrated in the following theorem.

\begin{theorem}\label{thm_optmem}
For $C_1 \leq \dots \leq C_K $ and $m_{\text{tot}} \leq 1$, the minimum worst-case delivery completion time (DCT) is given by
\vspace{-0.2in}
\begin{align}
\Theta^{*}_{\mathfrak A, \mathfrak D}(m_{\text{tot}}, \bm C)= \sum_{j=1}^{K} \dfrac{1}{C_j} - \max_{i \in [K]} \left\lbrace  \sum_{j=1}^{i} \dfrac{j \ \! m_{\text{tot}}}{i \ \! C_j} \right\rbrace,
\end{align}
and the optimal memory allocation is given by $m_1^* = \dots = m_q^* = \frac{m_{\text{tot}}}{q}$, where the user index $q = \operatorname{arg\, \! \! \max}_{i \in [K]} \left\lbrace  \sum_{j=1}^{i} \dfrac{j}{i \ \! C_j} \right\rbrace$.
\end{theorem}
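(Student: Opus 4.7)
The plan is to solve the joint LP in (\ref{eqn_opt2}) by pairing an explicit uncoded placement and linear delivery scheme achieving the stated DCT with a matching converse via a sorting reduction and a fractional-knapsack LP.

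\textbf{Achievability.} Let $q \in \arg\max_{i \in [K]} \sum_{j=1}^{i} j/(i C_j)$, and allocate $m_k = m_{\text{tot}}/q$ for $k \in [q]$ and $m_k = 0$ for $k > q$ (feasible because $m_{\text{tot}} \leq 1$ gives $m_k \leq 1$). Use singleton placements $a_{\{k\}} = m_k$, $a_{\phi} = 1 - m_{\text{tot}}$, and $a_{\mc S} = 0$ otherwise. For each pair $1 \leq i < j \leq q$, transmit a multicast $X_{\{i,j\}, \bm d}$ of size $v_{\{i,j\}} = m_{\text{tot}}/q$ obtained by XORing the piece of $\tilde W_{d_j, \{i\}}$ stored at user $i$ with the piece of $\tilde W_{d_i, \{j\}}$ stored at user $j$, and complement with unicasts of sizes $v_{\{k\}} = 1 - m_{\text{tot}}$ for $k \leq q$ and $v_{\{k\}} = 1$ for $k > q$. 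Using $\min(C_i, C_j) = C_i$ for $i<j$, the DCT $\sum_k v_{\{k\}}/C_k + \sum_{1 \leq i < j \leq q} v_{\{i,j\}}/C_i$ evaluates after algebraic cancellation to $\sum_{k=1}^{K} 1/C_k - m_{\text{tot}} \sum_{j=1}^{q} j/(q C_j)$, matching the theorem.

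\textbf{Converse.} First I apply a rearrangement step: because the feasibility set in (\ref{eqn_opt2}) is permutation-symmetric in the user labels while the weights $1/\min_{j \in \mc T} C_j$ penalize multicasts involving slower-link users more heavily, it is without loss of optimality to assume $m_1 \geq m_2 \geq \dots \geq m_K$ (matching the capacity ordering $C_1 \leq \dots \leq C_K$). Under this sorted assumption, the argument behind Theorem \ref{thm_spchcase}'s achievability extends to show that singleton-only placements $a_{\{k\}} = m_k$ (with $a_\phi = 1-\sum_k m_k$) and pairwise multicasts $v_{\{i,j\}} = \min(m_i, m_j) = m_j$ for $i<j$ form an extreme point of the feasibility polytope that remains optimal for the weighted DCT. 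Substituting into $\Theta$ and cancelling multicast/unicast contributions user-by-user yields $\Theta = \sum_k 1/C_k - \sum_k k\, m_k/C_k$. Reparameterizing $m_k = \sum_{j=k}^{K} \delta_j$ with $\delta_j \geq 0$ transforms the monotonicity constraints into nonnegativity, rewrites the budget as $\sum_j j \delta_j \leq m_{\text{tot}}$, and converts $\sum_k k m_k/C_k$ into $\sum_j \delta_j A_j$ with $A_j \triangleq \sum_{k=1}^{j} k/C_k$. Maximizing $\sum_j \delta_j A_j$ against this budget is a fractional-knapsack LP whose optimum concentrates all mass on the index $q$ maximizing the ratio $A_j/j = (1/j) \sum_{k=1}^{j} k/C_k$, exactly reproducing the closed form in the theorem.

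The step I expect to be most delicate is justifying that the singleton-only placement structure remains optimal under the weighted DCT objective, since Theorem \ref{thm_spchcase} guarantees this only for the unweighted load $\sum v_{\mc T}$. To close this gap, I plan to derive a weighted analog of the genie bound of Theorem \ref{thm_bound_genie}: specializing $\alpha_{[1,2,\dots,K]} = 1$ yields a per-user inequality lower-bounding $V_k \triangleq \sum_{\mc T: \min \mc T = k} v_{\mc T}$ in terms of $m_1, \dots, m_k$, and combining these through Abel summation against weights $1/C_k$ directly lower-bounds $\Theta = \sum_k V_k/C_k$ by $\sum_k 1/C_k - \sum_k k m_k/C_k$ in the sorted regime, matching the achievability term-by-term and completing the proof.
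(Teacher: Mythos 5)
Your achievability argument and your final reduction to a fractional-knapsack LP (reparameterizing $m_k=\sum_{j\ge k}\delta_j$, budget $\sum_j j\delta_j\le m_{\text{tot}}$, and selecting the index maximizing $\tfrac{1}{j}\sum_{k=1}^{j}k/C_k$) are correct and mirror the paper, which reduces (\ref{eqn_opt2}) to $\max\sum_k k\,m_k/C_k$ over $\sum_k m_k\le m_{\text{tot}}$, $m_1\ge\dots\ge m_K$, and solves that LP. The gap is in the converse, exactly at the step you flag as delicate, and in the sorting step. First, ``without loss of optimality $m_1\ge\dots\ge m_K$'' is not free: the objective is not permutation-symmetric, and the paper must prove an explicit exchange computation (Lemma \ref{lemma_nondec}) showing that swapping two adjacent memory values so the slower-link user gets the larger cache cannot increase the objective; moreover it does so on the \emph{reduced} objective, after the structural lemma, not on the full LP. Second, and more seriously, your proposed mechanism for establishing optimality of the singleton/pairwise structure fails as stated: specializing $\alpha_{[1,\dots,K]}=1$ in Theorem \ref{thm_bound_genie} yields a single aggregate inequality on the total load $\sum_{\mc T}v_{\mc T}$, not per-user inequalities on $V_k=\sum_{\mc T:\min\mc T=k}v_{\mc T}$. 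No lower bound on an individual $V_k$, $k\ge2$, in terms of $m_1,\dots,m_k$ can exist, since a feasible scheme may set $V_k=0$ for all $k\ge2$ by routing every transmission through a set containing user $1$. What your Abel-summation step actually requires are cumulative bounds on $\sum_{k\le i}V_k$ (e.g., genie bounds applied to each prefix subsystem $[i]$), which is not what Theorem \ref{thm_bound_genie} restricted to one permutation provides.

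The paper closes this step more elementarily (Lemma \ref{lemma_optcach}): it sums the delivery-completion constraints $\sum_{\mc T\ni k}v_{\mc T}\ge 1-m_k$ weighted by $1/C_k$, writes $\Theta_{\mathfrak A,\mathfrak D}$ as this weighted sum minus correction terms $\sum_{j\in\mc T,\,j\neq\min\mc T}v_{\mc T}/C_j$ for $|\mc T|\ge2$, and upper-bounds the corrections via the pairwise side-information constraints $\sum_{\mc T\supset\{i,j\}}v_{\mc T}\le\min\{m_i,m_j\}$ of Proposition \ref{prop_side_info}. This gives $\Theta_{\mathfrak A,\mathfrak D}\ge\sum_k(1-m_k)/C_k-\sum_{i<j}\min\{m_i,m_j\}/C_j$ for \emph{every} feasible scheme and every $\bm m$ with $\sum_k m_k\le1$, matched by the singleton scheme; Lemma \ref{lemma_nondec} then sorts $\bm m$, and your knapsack argument finishes. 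You should either substitute this Proposition-\ref{prop_side_info}-based bound for your genie-based plan, or actually prove the cumulative prefix bounds your Abel summation needs.
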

Proof of Theorem \ref{thm_optmem} is provided in Appendix \ref{app_thm_optmem}. Note that if the optimal solution is not unique, i.e., $q \in \{q_1,\dots,q_L \},$ for some $ L \leq K$, then $\bm m^* = \sum\limits_{i=1}^{L} \alpha_i \big[\frac{m_{\text{tot}}}{q_i}, \dots, \frac{m_{\text{tot}}}{q_i}, 0, \dots, 0 \big],$ where $\sum\limits_{i=1}^{L} \alpha_i=1$ and $\alpha_i \geq 0$. The next proposition shows that uniform memory allocation combined with the  Maddah-Ali--Niesen caching scheme yields an upper bound on $\Theta^{*}_{\mathfrak A, \mathfrak D}(m_{\text{tot}}, \bm C)$. 
\vspace{-0.15in}
\begin{proposition}\label{prop_unf_aloc}
For $m_{\text{tot}} \in [K]$ and $C_1 \leq C_2 \leq \dots \leq C_K$, we have
\begin{align}\label{eqn_unf_dct}
\Theta^{*}_{\mathfrak A, \mathfrak D}(m_{\text{tot}}, \bm C) \leq \Theta_{\text{unif}}(m_{\text{tot}}, \bm C) = \frac{1}{\binom{K}{m_{\text{tot}}}} \sum_{j=1}^{K-m_{\text{tot}}} \frac{\binom{K-j}{m_{\text{tot}}}}{C_j}.
\end{align}
\vspace{-0.15in}
\end{proposition}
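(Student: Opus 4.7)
The plan is to exhibit a specific feasible point for the linear program \textit{\textbf{O3}} in (\ref{eqn_opt2}) whose objective value equals $\Theta_{\text{unif}}(m_{\text{tot}}, \bm C)$, thereby upper bounding the optimum $\Theta^{*}_{\mathfrak A, \mathfrak D}$. The obvious candidate is uniform memory allocation $m_k = m_{\text{tot}}/K$ for all $k \in [K]$ combined with the Maddah-Ali--Niesen placement and delivery of \cite{maddah2014fundamental} with parameter $t = m_{\text{tot}}$. Since $m_{\text{tot}}$ is assumed integer-valued in $[K]$, this yields an integer $t$, and the resulting $(\bm a, \bm u, \bm v)$ clearly lies in $\mathfrak{A}(\bm m) \times \mathfrak{D}(\bm a)$ with $\sum_k m_k = m_{\text{tot}}$, so constraint (\ref{eqn_O2_cnst2}) and the budget constraint are satisfied.

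First I would specify the scheme in the notation of Sections \ref{sec_cach_plac}--\ref{sec_cach_dlv}: set $a_{\mc S} = 1/\binom{K}{t}$ for every $\mc S \subset [K]$ with $|\mc S| = t$, and $a_{\mc S}=0$ otherwise; set $v_{\mc T} = 1/\binom{K}{t}$ for every $\mc T \subset [K]$ with $|\mc T| = t+1$, and $v_{\mc T}=0$ otherwise; for each such $\mc T$ and each $j \in \mc T$ set $u^{\mc T}_{\mc T \setminus \{j\}} = 1/\binom{K}{t}$ and all other $u^{\mc T}_{\mc S} = 0$. A short check confirms feasibility: the placement budget per user is met since each user lies in $\binom{K-1}{t-1}$ of the chosen $t$-subsets, giving cache occupation $\binom{K-1}{t-1}/\binom{K}{t} = t/K = m_{\text{tot}}/K$; the multicast structure (\ref{eqn_KUE_struct}) holds with a single term on the left; the no-redundancy constraint (\ref{eqn_KUE_redund}) holds because each $t$-subset $\mc S$ is touched at most once per excluded user in the chosen multicast set; and the delivery completion (\ref{eqn_KUE_delv}) holds with equality, since for each user $k$ the fraction stored is $t/K$ and the fraction delivered via multicasts is $\binom{K-1}{t}/\binom{K}{t} = (K-t)/K$.

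Next I would evaluate the objective. By construction,
\begin{equation}
\sum_{\mc T \subsetneq_{\phi} [K]} \frac{v_{\mc T}}{\min_{j \in \mc T} C_j}
= \frac{1}{\binom{K}{t}} \sum_{\mc T \subset [K],\, |\mc T|=t+1} \frac{1}{\min_{j \in \mc T} C_j}.
\end{equation}
The key combinatorial observation is that, because $C_1 \leq C_2 \leq \dots \leq C_K$, $\min_{j \in \mc T} C_j = C_{\min \mc T}$, and the number of $(t+1)$-subsets of $[K]$ whose minimum element equals $j$ is $\binom{K-j}{t}$, corresponding to picking the remaining $t$ elements from $\{j+1,\dots,K\}$ (this count is zero unless $j \leq K-t$). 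Grouping the sum by the value of $\min \mc T = j$ therefore gives
\begin{equation}
\frac{1}{\binom{K}{t}} \sum_{j=1}^{K-t} \frac{\binom{K-j}{t}}{C_j} = \Theta_{\text{unif}}(m_{\text{tot}},\bm C),
\end{equation}
after substituting $t = m_{\text{tot}}$. Since the exhibited point is feasible, $\Theta^{*}_{\mathfrak A, \mathfrak D}(m_{\text{tot}},\bm C)$ is at most this value, which proves the claim.

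There is no real obstacle here; the argument is essentially a reorganization of the Maddah-Ali--Niesen multicast sum by the identity of the bottleneck user. The only subtlety worth flagging is the integrality of $m_{\text{tot}}$: for non-integer $m_{\text{tot}} \in [K]$, the usual memory-sharing convexification between $t = \lfloor m_{\text{tot}} \rfloor$ and $t = \lceil m_{\text{tot}} \rceil$ extends the bound, but this is already implicit in the statement's integer hypothesis $m_{\text{tot}} \in [K]$.
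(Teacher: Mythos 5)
Your proposal is correct and follows essentially the same route as the paper's own proof: instantiate uniform memory allocation with the Maddah-Ali--Niesen scheme as a feasible point of (\ref{eqn_opt2}), then group the $(t+1)$-subsets by their minimum-index (bottleneck) user to obtain the $\binom{K-j}{m_{\text{tot}}}$ count. Your explicit feasibility checks of (\ref{eqn_KUE_struct})--(\ref{eqn_KUE_delv}) are a welcome elaboration of what the paper leaves implicit, but the argument is the same.
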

\begin{proof} Assuming $m_j=m_{\text{tot}}/K, \forall j \in [K]$, the placement phase is described by $a_{\mc S}=1/\binom{K}{m_{\text{tot}}}$ for $  |\mc S| =m_{\text{tot}}$ and zero otherwise. While, the delivery phase is defined by $v_{\mc T}=1/\binom{K}{m_{\text{tot}}} $ for $  |\mc T| =m_{\text{tot}}+1$ and $u^{\mc T}_{\mc S}=v_{\mc T}$ for $\mc S \in \{ \mc T \setminus\{j\}: j \in \mc T\}$. In turn, we have
\begin{align}
\Theta_{\text{unif}}(m_{\text{tot}}, \bm C) =\sum_{\mc T \subsetneq_{\phi} [K]} \frac{v_{\mc T}}{\min\limits_{j \in \mc T} C_j}=\!\! \sum_{\mc T \subsetneq_{\phi} [K] : \ |\mc T| =m_{\text{tot}}+1} \frac{1/\binom{K}{m_{\text{tot}}}}{\min\limits_{j \in \mc T} C_j}=\frac{1}{\binom{K}{m_{\text{tot}}}} \sum_{j=1}^{K-m_{\text{tot}}} \frac{\binom{K-j}{m_{\text{tot}}}}{C_j},
\end{align}
since $C_1 \leq C_2 \leq \dots \leq C_K$ and there are $\binom{K-j}{m_{\text{tot}}}$ sets of size $m_{\text{tot}}+1$ that include user $j$ and do not include users $\{1,2, \dots, j-1\}$. Finally, $ \Theta^{*}_{\mathfrak A, \mathfrak D}(m_{\text{tot}}, \bm C) \leq \Theta_{\text{unif}}(m_{\text{tot}}, \bm C)$, since uniform memory allocation is a feasible solution (but not necessarily optimal) to (\ref{eqn_opt2}). 
\end{proof}
\vspace{-0.1in}
\section{Numerical Results}\label{sec_numerical}
First, we provide a numerical example for the optimal caching scheme obtained from (\ref{eqn_opt1}).
\begin{example}
Consider a caching system with $K=3 $, and $\bm m=[0.4, \ 0.5, \ 0.6]$. The caching scheme obtained from (\ref{eqn_opt1}), is described as follows.\newline
\underline{Placement phase}: Every file $W^{(l)}$ is divided into six subfiles, such that $a_{\{1\}} \! = \! 7/30$, $a_{\{2\}} \! = \! 4/30$, $a_{\{3\}} \! = \! 4/30$, $a_{\{1,2\}} \! = \! 1/30$, $a_{\{1,3\}} \! = \! 4/30$, and $a_{\{2,3\}} \! = \! 10/30$. 
\newline 
\underline{Delivery phase}: We have the following transmissions
\begin{align*}
X_{\{1,2\}, \bm d}&= \left( W_{d_1,\{2\}}^{\{1,2\}} \bigcup W_{d_1,\{2,3\}}^{\{1,2\}} \right) \oplus \ \left( W_{d_2,\{1\}}^{\{1,2\}} \bigcup W_{d_2,\{1,3\}}^{\{1,2\}} \right), X_{\{2,3\}, \bm d}= W_{d_2,\{3\}}^{\{2,3\}} \oplus \ W_{d_3,\{2\}}^{\{2,3\}},\\
X_{\{1,3\}, \bm d}&= \left(  W_{d_1,\{3\}}^{\{1,3\}} \bigcup W_{d_1,\{2,3\}}^{\{1,3\}} \right)  \oplus \ W_{d_3,\{1\}}^{\{1,3\}}, \ 
 X_{\{1,2,3\}, \bm d}= W_{d_1,\{2,3\}}^{\{1,2,3\}} \oplus \ W_{d_2,\{1,3\}}^{\{1,2,3\}} \oplus \ W_{d_3,\{1,2\}}^{\{1,2,3\}},
\end{align*}
and the subfile sizes are as follows
\begin{enumerate}
\item $v_{\{1,2\}}=10/30 $, where $u^{\{1,2\}}_{\{2\}}=a_{\{2\}}$, $u^{\{1,2\}}_{\{2,3\}}=0.2$, $u^{\{1,2\}}_{\{1\}}=a_{\{1\}}$, and $u^{\{1,2\}}_{\{1,3\}}=0.1$.
\item $v_{\{1,3\}}=7/30 $, where $u^{\{1,3\}}_{\{3\}}=a_{\{3\}}$, $u^{\{1,3\}}_{\{2,3\}}=0.1$, and $u^{\{1,3\}}_{\{1\}}=a_{\{1\}}$.
\item $v_{\{2,3\}}=4/30 $, where $u^{\{2,3\}}_{\{3\}}=a_{\{3\}}$ and $u^{\{2,3\}}_{\{2\}}=a_{\{2\}}$.
\item $v_{\{1,2,3\}}=1/30 $, where $u^{\{1,2,3\}}_{\{2,3\}}=u^{\{1,2,3\}}_{\{1,3\}}=u^{\{1,2,3\}}_{\{1,2\}}=a_{\{1,2\}}$.
\end{enumerate}
The minimum worst-case delivery load $R^*_{\mathfrak{A}}(\bm m)=R^*_{\mathfrak{A},\mathfrak{D}}(\bm m)=22/30$.  Note that, per Remark~\ref{remark_subpack}, the required subpacketization level for the proposed scheme is $30$, i.e., the minimum packet size is given by 
\begin{equation*}
\gcd(\bm u)=\gcd\left(\dfrac{7F}{30}, \ \! \dfrac{6F}{30}, \ \! \dfrac{4F}{30}, \ \! \dfrac{3F}{30}, \ \! \dfrac{F}{30}\right)=\dfrac{F}{30},  
\end{equation*}
for $F=30n$, $n=1,2,\dots$.
\end{example}
\begin{figure}[t]
	\includegraphics[scale=0.6]{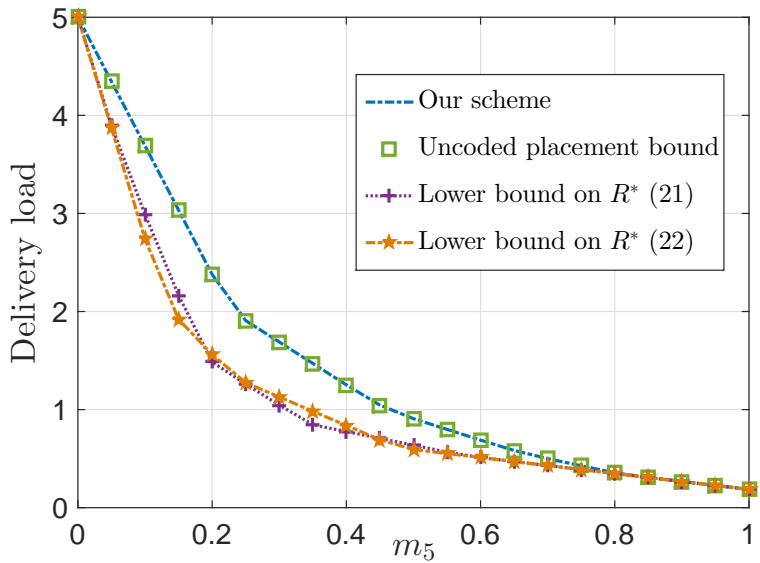}
	\centering
	\caption{Comparing $R^*_{\mathfrak{A},\mathfrak{D}}(\bm m)$ with the lower bounds in (\ref{eqn_bound_genie})-(\ref{eqn_bound_wang}), for $K=5$, and $ m_k=0.95 \ \! m_{k+1}$.}\label{fig:comp_dlv}
	\vspace{-0.3in}
\end{figure}
\begin{figure}[t]
	\includegraphics[scale=0.6]{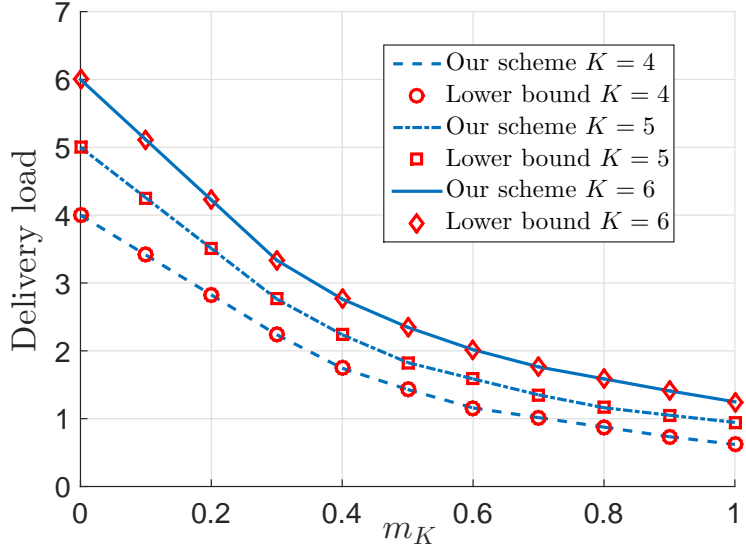}
	\centering
	\caption{Comparing $R^*_{\mathfrak{A},\mathfrak{D}}(\bm m)$ with the lower bounds on $R^*_{\mathfrak{A}}(\bm m)$, for $ m_k=0.75 \ \! m_{k+1}$.}\label{fig:R_achv_vs_lb}
\end{figure}
In Fig. \ref{fig:comp_dlv}, we compare the delivery load $R^*_{\mathfrak{A},\mathfrak{D}}(\bm m)$ obtained from optimization problem (\ref{eqn_opt1}), with the lower bounds on $ R^*(\bm m)$ in (\ref{eqn_bound_amiri}), (\ref{eqn_bound_wang}), and the lower bound with uncoded placement in (\ref{eqn_bound_genie}), for $N=K=5$ and $m_k=0.95 \ \! m_{k+1}$. 

From Fig. \ref{fig:comp_dlv}, we observe that $R^*_{\mathfrak{A},\mathfrak{D}}(\bm m)= R^*_{\mathfrak{A}}(\bm m)$, which is also demonstrated in Fig. \ref{fig:R_achv_vs_lb}, for $K=4,5,6$, and $m_k=0.75 \ \! m_{k+1}$. In Fig. \ref{fig:comp_dlv_others}, we compare $R^*_{\mathfrak{A},\mathfrak{D}}(\bm m)$ with the achievable delivery loads in \cite{yang2018coded,sengupta2016layered,daniel2017optimization}, for $K=5$ and $ m_k=0.75 \ \! m_{k+1}$.
\begin{figure}[t]
	\includegraphics[scale=0.6]{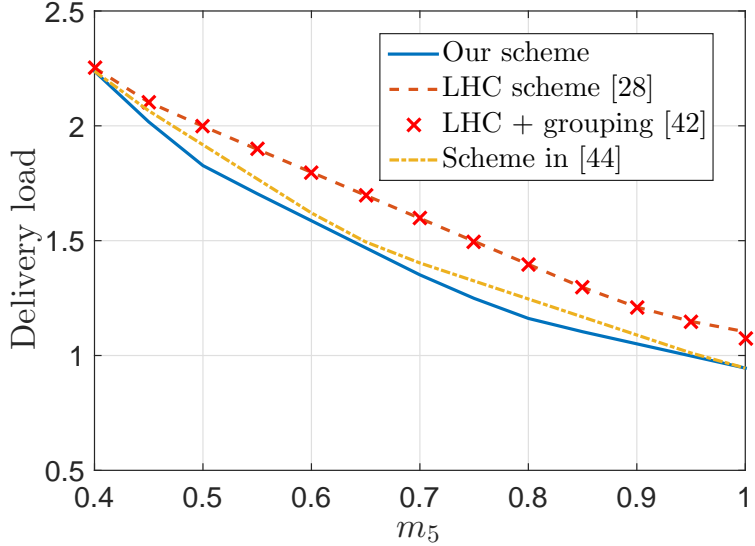}
	\centering
	\caption{Comparing $R^*_{\mathfrak{A},\mathfrak{D}}(\bm m)$ with the delivery loads achieved by the schemes in \cite{yang2018coded,sengupta2016layered,daniel2017optimization}, for $K=5$ and $ m_k=0.75 \ \! m_{k+1}$.}\label{fig:comp_dlv_others}
	\vspace{-0.2in}
\end{figure}
%
%

The next example concerns with solving (\ref{eqn_opt2}) for a system with unequal download rates, and comparing the optimal memory allocation and caching scheme with the Maddah-Ali--Niesen caching scheme with uniform memory allocation. 
\begin{example}
Consider a caching system with $K=3 $, memory budget $m_{\text{tot}}=1$, and $C_1 \leq C_2 \leq C_3$, which implies $\Theta_{\text{unif}}(1, \bm C )=\frac{1}{3} \left( \frac{2}{C_1} + \frac{1}{C_2} \right),$ and $q = \operatorname{arg\, \! \! \max}_{i \in [3]} \Big\lbrace  \sum\limits_{j=1}^{i} \dfrac{j }{i \ \! C_j} \Big\rbrace$. In particular, we consider the following cases for the download rates:
\begin{enumerate}
\item For $\bm C=[0.2, \ 0.4, \ 0.5]$, we get $q=3$, hence the optimal solution is the  Maddah-Ali--Niesen caching scheme with $\bm m^* =[1/3, \ 1/3, \ 1/3]$, and we have $\Theta^{*}_{\mathfrak A, \mathfrak D}=4.1667$. 
%
\item For $\bm C=[0.3, \ 0.3, \ 0.6]$, we get $ q \in \{2,3\}$, i.e., the optimal solution is not unique. $\bm m^*=[\frac{\alpha}{2} \! + \! \frac{1-\alpha}{3}, \  \frac{\alpha}{2} \! + \! \frac{1-\alpha}{3}, \ \frac{1-\alpha}{3}]$, where $\alpha \in [0,1]$, and $\Theta^{*}_{\mathfrak A, \mathfrak D}=\Theta_{\text{unif}}=3.3333$.
\item For $\bm C=[0.2, \ 0.3, \ 0.6]$, we get $ q=2$. $\bm m^* =[0.5, \ 0.5, \ 0]$ and the optimal caching scheme is $a^*_{\{1\}}=a^*_{\{2\}}=0.5, $ $ v^*_{\{1,2\}}=u^{  \! \!  *  \{1,2\}}_{\{1\}}=u^{  \! \!  *  \{1,2\}}_{\{2\}}=0.5,$ $v^*_{\{3\}}=1, $
which results in $\Theta^{*}_{\mathfrak A, \mathfrak D}=4.1667 < \Theta_{\text{unif}}=4.4444.$ 
\end{enumerate}
\vspace{-0.34in}
\end{example}
\begin{figure}[t]
\includegraphics[scale=0.6]{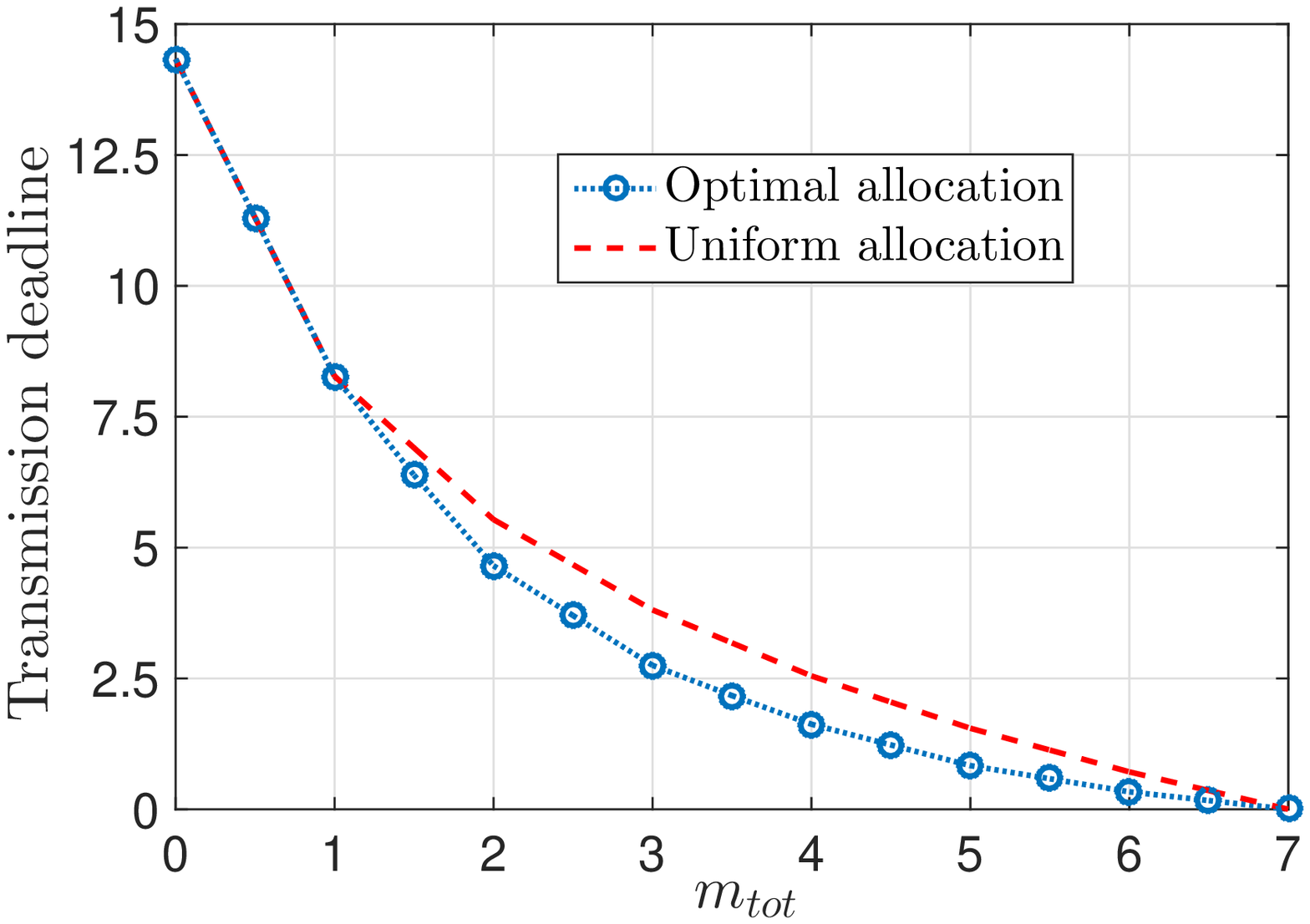}
\centering
\caption{Comparing $\Theta^{*}_{\mathfrak A, \mathfrak D}(m_{\text{tot}}, \bm C )$ and $ \Theta_{\text{unif}}(m_{\text{tot}}, \bm C )$ for $ \bm C =[0.2, \ 0.4, \ 0.6, \ 0.6, \ 0.8, \ 0.8, \ 1] $.}\label{fig_deadline_vs_mtot}
\vspace{-0.3in}
\end{figure}
In Fig. \ref{fig_deadline_vs_mtot}, we compare $ \Theta^{*}_{\mathfrak A, \mathfrak D}(m_{\text{tot}}, \bm C )$ with $ \Theta_{\text{unif}}(m_{\text{tot}}, \bm C ) $ for $K=7$, and $ \bm C =[0.2, \ 0.4, \ 0.6, $ $ 0.6, \ 0.8, \ 0.8, \ 1] $. We observe that $ \Theta^{*}_{\mathfrak A, \mathfrak D}(m_{\text{tot}}, \bm C ) \leq $ $\Theta_{\text{unif}}(m_{\text{tot}}, \bm C )$. For $m_{\text{tot}} \leq 1$, we have $ \operatorname{arg\, \! \! \max}_{i \in [K]}  \sum_{j=1}^{i} (j \ \! m_{\text{tot}})/(i \ \! C_j) = K$, which implies $\Theta^{*}_{\mathfrak A, \mathfrak D}(m_{\text{tot}}, \bm C ) = \Theta_{\text{unif}}(m_{\text{tot}}, \bm C )$.

In Fig. \subref*{fig_mem_ex1} and \subref*{fig_mem_ex2}, we show the optimal memory allocation for   
$\bm C=[0.2, \ 0.2, \ 0.2, $ $\ 0.5, \ 0.6, \ 0.7, \ 0.7]$ and $\bm C=[0.2, \ 0.2, \ 0.4, \ 0.4, \ 0.6, \ 0.7, \ 0.8]$, respectively. A general observation is that the optimal memory allocation balances the gain attained from assigning larger memories for users with weak links and the multicast gain achieved by equating the cache sizes. Consequently, in the optimal memory allocation, the users are divided into groups according to their rates, the groups that include users with low rates are assigned larger fractions of the cache memory budget, and users within each group are given equal cache sizes. These characteristics are illustrated in Fig. \subref*{fig_mem_ex1}, which shows that the users are grouped into $\mc G_1=\{1,2,3\}$ and $\mc G_2=\{4,5,6,7\}$ for all $m_{\text{tot}} \in [0,7)$. Fig. \subref*{fig_mem_ex2} shows that the grouping not only depends on the rates $\bm C$, but also on the cache memory budget $m_{\text{tot}} $. For instance, for $m_{\text{tot}}=2 $, we have two groups $\mc G_1=\{1,2\}$ and $\mc G_2=\{3,4,5,6,7\}$, however, for $m_{\text{tot}}=3 $, we have $\mc G_1=\{1,2\}$, $\mc G_2=\{3,4\}$, and $\mc G_2=\{5,6,7\}$.
\begin{figure*}[t]
	\centering
	\begin{tabular}{cc}
	\hspace{-0.2in} \subfloat[ $ C_1 =  C_2= C_3=0.2, C_4=0.5,C_5=0.6,$ and $ \qquad \qquad C_6 =C_7=0.7$.]{
				\label{fig_mem_ex1}
				\includegraphics[scale=0.46]{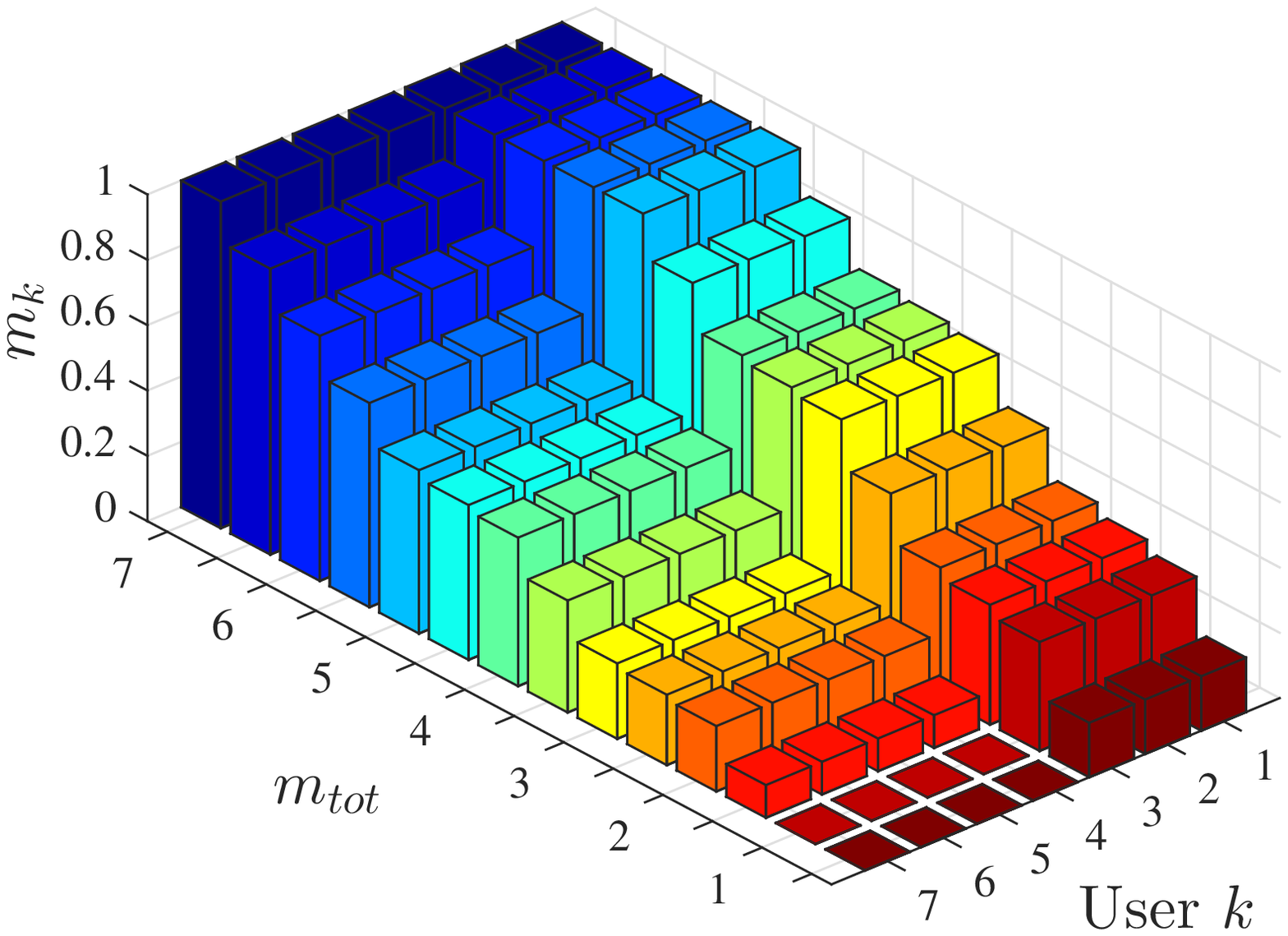} }
		
		& \hspace{-0.1in}
		\subfloat[    $  C_1 = C_2=0.2, C_3 = C_4=0.4,C_5=0.6, C_6 =0.7,$ and $C_7=0.8$.]{ \label{fig_mem_ex2}
			\includegraphics[scale=0.46]{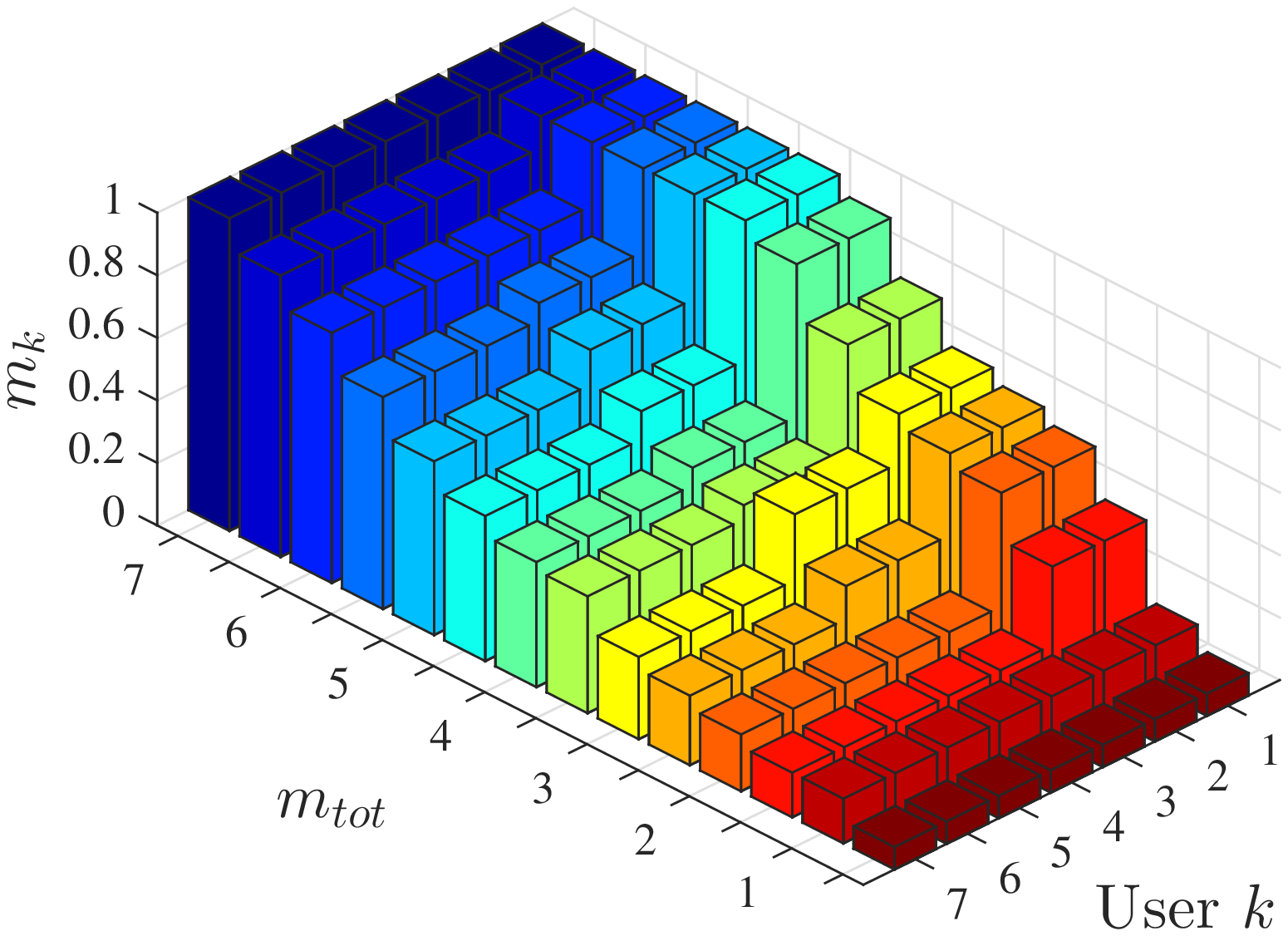}}
		\\
	\end{tabular}       
	\caption{The optimal memory allocations with different link capacities.}
	\vspace{-0.3in}
\end{figure*} 
%
%
%
%
\section{Conclusion}\label{sec_con}
In this paper, we have considered a downlink where the end-users are equipped with cache memories of different sizes. We have shown that the problem of minimizing the worst-case delivery load with uncoded placement and linear delivery can be modeled as a linear program. We have derived a lower bound on the worst-case delivery load with uncoded placement. We have characterized the exact delivery load memory trade-off with uncoded placement for the case where the aggregate cache size is less than or equal to the library size (small memory), the case where the aggregate cache size is greater than or equal to $K-1$ times the library size (large memory), and the three-user case for arbitrary memory size. The proposed scheme outperforms other works in the same setup \cite{yang2018coded,sengupta2016layered,daniel2017optimization}, and is numerically observed to verify the excellent performance of uncoded placement for parameters of interest.

We have also considered a system where the links between the server and the users have unequal capacities. In this scenario, the server suggests the memory sizes for cached contents along with contents to the users subject to a total memory budget, in order to minimize the delivery completion time. We have observed that the optimal solution balances between allocating larger cache sizes to the users with low link capacities and uniform memory allocation which maximizes the multicast gain. For when the total cache budget is less than the library size, we have shown that the optimal memory allocation distributes the cache memory budget uniformly over some number of users with the lowest link capacities. This number is a function of the users' link capacities.

The optimization perspective in this work provides a principled analysis of optimal caching and delivery schemes for cache-enabled networks, by translating the design elements of cache placement and delivery into structural optimization constraints. Future directions include different network topologies and systems with multiple servers and multiple libraries.

\vspace{-0.1in}
\appendices
\section{Proof of Proposition \ref{prop_side_info}}\label{app_side_info}
\vspace{-0.5in}
\begin{align}
\sum_{\mc S \subset [K] : \ \mc S^{\prime}  \subset \mc S, j \not\in \mc S} \! \! \! \! \! \! \!  a_{\mathcal S} &= a_{\mc S^{\prime}} \mathbbm{1}_{(|\mc S^{\prime}|=1)} + \sum_{\mc S \subset [K] : \ \mc S^{\prime}  \subset \mc S, j \not\in \mc S,|\mc S| \geq 2 } \! \! \! \! \! \! \!  a_{\mathcal S} \\
& \geq u^{\{j\} \cup \mc S^{\prime}}_{\mc S^{\prime}} \mathbbm{1}_{(|\mc S^{\prime}|=1)} + \sum_{\mc S \subset [K] : \! \ \mc S^{\prime}  \subset \mc S, j \not\in \mc S,|\mc \mc S| \geq 2 } \ \sum_{\mc T \subsetneq_{\phi} [K] : \ \! j \in \mc T, \mc T \cap \mc S \neq \phi,  \mc T \setminus \{j\} \subset \mc S } u^{\mc T}_{\mc S}, \label{eqn_app_sideinfo_1} \\
& \geq u^{\{j\} \cup \mc S^{\prime}}_{\mc S^{\prime}} \mathbbm{1}_{(|\mc S^{\prime}|=1)} + \sum_{\mc S \subset [K] : \! \ \mc S^{\prime}  \subset \mc S, j \not\in \mc S,|\mc \mc S| \geq 2 } \ \sum_{\mc T \subsetneq_{\phi} [K] : \ \! \{j\}  \cup \mc S^\prime \subset \mc T,  \mc T \setminus \{j\} \subset \mc S} u^{\mc T}_{\mc S}, \label{eqn_app_sideinfo_2} \\
&  = u^{\{j\} \cup \mc S^{\prime}}_{\mc S^{\prime}} \mathbbm{1}_{(|\mc S^{\prime}|=1)} + \sum_{\mc T \subsetneq_{\phi} [K] : \ \!  \{j\}  \cup \mc S^\prime \subset \mc T} \ \sum_{\mc S \subset [K] : \! \ \mc S^{\prime}  \subset \mc S, \mc T \setminus \{j\} \subset \mc S, j \not\in \mc S, |\mc \mc S| \geq 2} u^{\mc T}_{\mc S},  \label{eqn_app_sideinfo_3} \\
& = \sum_{\mc T \subsetneq_{\phi} [K] : \ \! \{j\}  \cup \mc S^\prime \subset \mc T} \ \sum_{\mc S \subset [K] : \! \ \mc T \setminus \{j\} \subset \mc S, j \not\in \mc S}u^{\mc T}_{\mc S} = \sum_{\mc T \subsetneq_{\phi} [K] : \ \! \{j\}  \cup \mc S^\prime \subset \mc T} v_{\mc T}, \label{eqn_app_sideinfo_4} 
\end{align}
where the indicator function $\mathbbm{1}_{(|\mc S^{\prime}|=1)}=1, $ if $|\mc S^{\prime}|=1 $ and zero otherwise, (\ref{eqn_app_sideinfo_1}) follows from the redundancy constraints in (\ref{eqn_KUE_redund}) and $ u^{\{j\} \cup \mc S^{\prime}}_{\mc S^{\prime}} \! \leq a_{\mc S^{\prime}} $, (\ref{eqn_app_sideinfo_2}) follows from the fact that $\mc S^\prime \subset \mc S $, and $\mc S^\prime \subset \mc T $ implies $\mc T \cap \mc S \neq \phi$. 
 By interchanging the order of summations over $\mc S$ and $\mc T$ in (\ref{eqn_app_sideinfo_2}), we get (\ref{eqn_app_sideinfo_3}), since both represent the set defined by
\begin{align}
\Big\{ (\mc T, \mc S) \big| \ \! \mc S^{\prime} \subset \mc S, \ \! j\not\in \mc S, \ \!   |\mc S|\geq 2, \ \! \{j\} \cup \mc S^{\prime} \subset \mc T, \ \! \mc T \setminus \{j\} \subset \mc S  \Big\}. 
\end{align}
The first equality in (\ref{eqn_app_sideinfo_4}) follows from the fact that $\{j\} \cup \mc S^{\prime} \subset \mc T$ and $ \mc T \setminus \{j\} \subset \mc S $ implies $\mc S^\prime \subset \mc S$, which can be proved by contradiction. More specifically, if $\mc S^\prime \not\subset \mc S$, i.e., $ \exists \ \! l \in \mc S^\prime$ and $l \not\in \mc S$, then $\{j\} \cup \mc S^\prime \subset \mc T $ implies $l \in \mc T \! \setminus \! \{j\}$. This contradicts $\mc T \! \setminus \! \{j\} \subset \mc S $, since $l \in \mc T \! \setminus \! \{j\}$ and $l \not\in \mc S $. 

The last equality follows from the structural constraints in (\ref{eqn_KUE_struct}).


\section{Proof of Theorem \ref{thm_bound_genie}: Lower bound with uncoded placement}\label{app_thm_genie}

References \cite{wan2016optimality,wan2017novel} have shown that the delivery phase is equivalent to an index-coding problem and the delivery load is lower bounded by the acyclic index-coding bound \cite[Corollary 1]{arbabjolfaei2013capacity}. Reference \cite{yu2016exact} has proposed an alternative proof for the uncoded placement bound \cite{wan2016optimality,wan2017novel} using a genie-aided approach. For ease of exposition, we will follow the genie-aided approach \cite{yu2016exact}. We consider a virtual user whose cache memory is populated by a genie. For any permutation of the users $[q_1, \dots, q_K]$, the virtual user caches the file pieces stored at user $q_j$ excluding files requested by $\{ q_1,\dots,q_{j-1} \}$ for $j \in [K] $, i.e., the virtual users cache content is given by
\begin{align}
Z_{vir} = \bigcup_{j=1}^{K}  \bigcup_{ l \in [N]\setminus \{ d_{q_1},\dots,d_{q_{j-1}} \} } \ \bigcup_{\mc S \subset [K] : \ \! \{q_j\} \in \mc S, \{q_1,\dots,q_{j-1}\} \cap \mc S=\phi } \tilde W_{l,\mc S}.
\end{align}
Using the virtual user cache content and the transmitted signals, we can decode all the requested files. Additionally, for any uncoded placement $ \bm a \in \mathfrak A (\bm m)$, the worst-case delivery load $ R^*_{\mathfrak A}(\bm m, \bm a) $ satisfies  \cite{yu2016exact} $R^*_{\mathfrak A}(\bm m, \bm a) \geq \sum\limits_{j=1}^{K} \sum\limits_{\mc S \subset [K] : \{ q_1,\dots,q_{j} \} \cap \mc S=\phi } \! \! \! \! a_{\mc S} , \ \forall \bm q \in \mc P_{[K]}, $ 
where $\mc P_{[K]} $ is the set of all permutations of $[K]$. Hence, by taking the convex combination over all possible permutations of the users, we get
\vspace{-0.2in}    
\begin{align}
R^*_{\mathfrak A}(\bm m, \bm a) &\geq \sum_{\bm q \in  \mc P_{[K]} } \alpha_{\bm q} \left(\sum_{j=1}^{K} \sum_{\mc S \subset [K] : \{ q_1,\dots,q_{j} \} \cap \mc S=\phi } a_{\mc S}\right)  , \\
&= \sum_{\bm q \in  \mc P_{[K]} } \alpha_{\bm q} \left(K a_{\phi} + \sum_{j=1}^{K-1} j \sum_{\mc S \subset [K] : \{ q_1,\dots,q_{j} \} \cap \mc S=\phi, \ \! q_{j+1} \in \mc S } a_{\mc S}\right), 
\end{align}
where $\sum\limits_{\bm q \in  \mc P_{[K]} } \alpha_{\bm q} =1$, and $\alpha_{\bm q} \geq 0 , \forall \bm q \in  \mc P_{[K]} $. By rearranging the summations, we get
\vspace{-0.2in} 
\begin{align}
R^*_{\mathfrak A}(\bm m, \bm a) \geq \sum_{\mc S \subset [K]} \gamma_{\mc S} \ \! a_{\mc S},
\end{align}
where $\gamma_{\mc S}$ is given by (\ref{eqn_gamma_S}).
%
Therefore, we have 
\vspace{-0.2in} 
\begin{align}\label{eqn_app_R_pr}
R^*_{\mathfrak A}(\bm m) \geq \min_{\bm a \in \mathfrak A (\bm m)} \sum_{\mc S \subset [K]} \gamma_{\mc S} \ \! a_{\mc S}, 
\end{align}
Furthermore, the dual of the linear program in (\ref{eqn_app_R_pr}) is given by 
\begin{subequations} \label{eqn_opt6}
	\begin{align}
	& \max_{\lambda_{k} \geq 0,\lambda_{0}}  
	& &  -\lambda_0-\sum_{k=1}^{K} m_k \ \! \lambda_k \\
	& \text{subject to}
	& & \lambda_0+ \sum_{k \in \mc S} \lambda_k + \gamma_{\mc S} \geq 0, \ \! \forall \mc S \subset [K],
	\end{align}
\end{subequations}
where $\lambda_0 $ and $\lambda_k$ are the dual variables associated with $ \sum\limits_{\mc S \subset [K]} a_{\mc S}=1 $, and $\sum\limits_{\mc S \subset [K]: k \in \mc S} a_{\mc S}\leq m_k $, respectively. By taking the maximum over all convex combination, we obtain (\ref{eqn_bound_genie}).
\vspace{-0.15in}
\section{Proof of Theorem \ref{thm_3ue} : $R^*_{\mathfrak{A}}(\bm m)$ for $K=3$}\label{app_thm_3ue}

\subsection{Region \rm{I}: $\sum_{j=1}^{3} m_j \leq 1$ (Follows from Theorem \ref{thm_spchcase})}
\subsection{Region \rm{II}: $ 1 < \sum_{j=1}^{3} m_j \leq 2$, $m_3 < m_2 +3 m_1 -1$, and $m_3 < 2 (1- m_2)$ }
\textbf{Achievability:} The caching scheme defined in Table \ref{tab_reg2} achieves 
$R^*_{\mathfrak{A}}(\bm m) \! = \! \dfrac{5}{3} \! - \!  m_1 \! - \dfrac{2 \ \! m_2}{3} \! -  \! \dfrac{m_3}{3} $.

\begin{table}[h]
\centering
\caption{Optimal caching scheme for Region \rm{II}}\label{tab_reg2}
\begin{tabular}{|l|l|}
  \hline
  Placement scheme & Delivery scheme\\
  \hline	  
   $a_{\{1\}}=(2+m_2-m_3)/3-m_1$,  &  $v_{\{1,2\}} = (2+ 2 m_3-2 m_2)/3-m_1$, $u^{\{1,2\}}_{\{1,3\}} = m_3-m_2$, \\
    $ a_{\{2\}}=a_{\{3\}}=(2-2 \ \! m_2-m_3)/3$, &  $v_{\{1,3\}} = (2+ m_2-m_3)/3-m_1$, $u^{\{1,3\}}_{\{2,3\}} = m_2-m_1$,\\
   $a_{\{1,2\}}=m_1-(m_3+1-m_2)/3$, &  $v_{\{2,3\}} = u^{\{2,3\}}_{\{2\}} = u^{\{2,3\}}_{\{3\}} = (2- 2 m_2-m_3)/3 $, \\ 
	$a_{\{1,3\}}=m_1-(2 m_2+1-2 m_3)/3$, & $v_{\{1,2,3\}} = m_1 + (m_2-m_3-1)/3$, $u^{\{1,2\}}_{\{2,3\}} = m_3-m_1$, \\ 
	$a_{\{2,3\}}=(4 m_2+2 m_3 -1)/3-m_1$. & $u^{\{1,2\}}_{\{1\}} = u^{\{1,3\}}_{\{1\}} = a_{\{1\}}$, $u^{\{1,3\}}_{\{3\}} = a_{\{3\}}$, $u^{\{1,2\}}_{\{2\}} = a_{\{2\}}$. \\
\hline
\end{tabular}
\vspace{-0.2in}
\end{table} 

\textbf{Converse:} For any $\bm a \in \mathfrak{A}(\bm m)$ and a permutation $\bm q$, we have
\vspace{-0.2in} 
\begin{align}
R^*_{\mathfrak A}(\bm m, \bm a) &\geq 3 a_{\phi}+2 a_{\{3\}} + a_{\{2\}}+ a_{\{2,3\}}, \text{ for } \bm q =[1, \ 2, \ 3], \label{eqn_app_3UE_1} \\
R^*_{\mathfrak A}(\bm m, \bm a) &\geq 3 a_{\phi}+2 a_{\{1\}} + a_{\{3\}}+ a_{\{1,3\}}, \text{ for } \bm q =[2, \ 3, \ 1], \label{eqn_app_3UE_2} \\
R^*_{\mathfrak A}(\bm m, \bm a) &\geq 3 a_{\phi}+2 a_{\{2\}} + a_{\{3\}}+ a_{\{2,3\}}, \text{ for } \bm q =[1, \ 3, \ 2]. \label{eqn_app_3UE_3}
\end{align}
Hence, by taking the average of (\ref{eqn_app_3UE_1})-(\ref{eqn_app_3UE_3}), we get
\begin{align}
& \! \! \! \! R^*_{\mathfrak A}(\bm m, \bm a) \geq 3 a_{\phi}+ \dfrac{2 a_{\{1\}}}{3} + a_{\{2\}}+ \dfrac{4 a_{\{3\}}}{3} + \dfrac{2 a_{\{2,3\}}}{3}+\dfrac{a_{\{1,3\}}}{3}, \\
&\! \! \! \! R^*_{\mathfrak A}(\bm m) \! \geq \! \min_{\bm a \in \mathfrak A (\bm m)} \! \left\lbrace \! \dfrac{5 a_{\phi}}{3} + \dfrac{2 a_{\{1\}}}{3} + a_{\{2\}}+ \dfrac{4 a_{\{3\}}}{3} + \dfrac{2 a_{\{2,3\}}}{3}+\dfrac{a_{\{1,3\}}}{3} \! \right\rbrace \! = \! \dfrac{5}{3} \! - \!  m_1 \! - \dfrac{2 \ \! m_2}{3} \! -  \! \dfrac{m_3}{3}, 
\end{align}
which is obtained by solving the dual linear program, as in Appendix \ref{app_thm_genie}.
\vspace{-0.25in}
\begin{table}[t]
\centering
\caption{Optimal caching scheme for Region \rm{III}}\label{tab_reg3}
\begin{tabular}{|l|l|l|}
  \hline
  Conditions & Placement scheme & Delivery scheme\\
  \hline	  
  $m_1 \leq 1/3$,  & $a_{\{1\}}=m_1$,  &  $v_{\{1\}} = 1-3 m_1$ , $v_{\{2\}} = m_3-m_2$,\\
   $m_1+m_3 <1 $ & $a_{\{2\}}=1-(m_1+m_3) $, &  $v_{\{1,2\}} = u^{\{1,2\}}_{\{1\}} = u^{\{1,2\}}_{\{2\}} =m_1 $, \\
   & $a_{\{3\}}=1-(m_1+m_2) $, &  $v_{\{1,3\}} = u^{\{1,3\}}_{\{1\}} = u^{\{1,3\}}_{\{3\}} =m_1$, \\ 
	& $a_{\{2,3\}}=\sum_{j=1}^{3} m_j-1 $. & $v_{\{2,3\}} = u^{\{2,3\}}_{\{2\}} = u^{\{2,3\}}_{\{3\}} = 1-(m_1+m_3) $. \\ 
\hline
  $m_1 >  1/3$,  & $a_{\{1\}}=1-2 m_1 $,  &  $v_{\{2\}} = 1 +m_3 - 3 m_1-m_2$, \\
   $m_3 < 2 m_1 $ & $a_{\{2\}}=2 m_1-m_3 $, & $v_{\{1,2\}} = u^{\{1,2\}}_{\{1\}}+u^{\{1,2\}}_{\{1,3\}} =m_1 $, $u^{\{1,2\}}_{\{2\}}=a_{\{2\}}$, \\
   & $a_{\{3\}}=1-(m_1+m_2) $, & $v_{\{1,3\}} = u^{\{1,3\}}_{\{1\}}=1-2 m_1 $ , $u^{\{1,3\}}_{\{3\}}=a_{\{3\}}$\\ 
	& $a_{\{1,3\}}=3 m_1-1$, & $v_{\{2,3\}} = u^{\{2,3\}}_{\{2\}}= u^{\{2,3\}}_{\{3\}}=2 m_1 - m_3$, \\
	& $a_{\{2,3\}}=m_2+m_3-2 m_1$. & $u^{\{1,2\}}_{\{2,3\}}=m_3-m_1$, $u^{\{1,3\}}_{\{2,3\}}=m_2-m_1$.\\ 
  \hline
  $m_1+m_3 \geq 1$,  & $a_{\{1\}}=1-m_3$ ,  &  $v_{\{1\}} = m_3-2 m_1$ , $v_{\{2\}} = 1-(m_1+m_2)$ , \\
   $m_3 \geq 2 m_1 $ & $a_{\{3\}}=1-(m_1+m_2)$, &  $v_{\{1,2\}} = u^{\{1,2\}}_{\{1\}}+u^{\{1,2\}}_{\{1,3\}} =u^{\{1,2\}}_{\{2,3\}} =m_1 $, \\
   & $a_{\{1,3\}}=m_1+m_3-1 $, &  $v_{\{1,3\}} = u^{\{1,3\}}_{\{1\}}=1-m_3 $,\\ 
	& $a_{\{2,3\}}=m_2 $. & $u^{\{1,3\}}_{\{3\}} + u^{\{1,3\}}_{\{2,3\}} =1-m_3$. \\
  \hline
\end{tabular}
\vspace{-0.2in}
\end{table} 

\subsection{Region \rm{III}: $ 1 < \sum_{j=1}^{3} m_j \leq 2$, $m_1 + m_2 < 1 $, and $m_3 \geq m_2 +3 m_1 -1$ }
\textbf{Achievability:} There are multiple caching schemes that achieve $R^*_{\mathfrak{A},\mathfrak{D}}(\bm m) \!  = \! 2 \! - \!  2 m_1 \! - m_2$. In particular, we consider caching schemes that satisfy 
\vspace{-0.15in}
\begin{align}\label{eqn_app_3ue_rIII_1}
&a_{\{1\}} \! + \! a_{\{1,3\}} \! = \! m_1, \ \! a_{\{2\}} \! + \! a_{\{2,3\}} \! = \! m_2,  
a_{\{3\}} \! = \! 1 \! - \! m_1 \! - \! m_2, \! \ a_{\{1,3\}} \! + \! a_{\{2,3\}} \! = \!m_1 \! + \! m_2 \! + \! m_3 \! - \! 1, \\
&v_{\{1,2\}} =m_1, \ v_{\{1,3\}}=a_{\{1\}}, \ v_{\{2,3\}}=a_{\{2\}}, \ v_{\{1,3\}}+v_{\{2,3\}}=1-m_3\\
&v_{\{1\}}+v_{\{1,3\}}=1-2m_1, \ 
v_{\{2\}}+v_{\{2,3\}}=1-m_1 -m_2. \label{eqn_app_3ue_rIII_2}
\end{align}
In Table \ref{tab_reg3}, we provide one feasible solution to (\ref{eqn_app_3ue_rIII_1})-(\ref{eqn_app_3ue_rIII_2}).

\textbf{Converse:} For any $\bm a \in \mathfrak{A}(\bm m)$ and $\bm q = [1, \ 2, \ 3]$, we have
\vspace{-0.1in} 
\begin{align}
R^*_{\mathfrak A}(\bm m, \bm a) &\geq 3 a_{\phi}+2 a_{\{3\}} + a_{\{2\}}+ a_{\{2,3\}}
 \geq 2 a_{\phi}+2 a_{\{3\}} + a_{\{2\}}+ a_{\{2,3\}}, \label{eqn_app_regIII} \\
 R^*_{\mathfrak A}(\bm m) \! &\geq \! \min_{\bm a \in \mathfrak A (\bm m)} \! \left\lbrace 2 a_{\phi}+2 a_{\{3\}} + a_{\{2\}}+ a_{\{2,3\}} \right\rbrace \! = \! 2 \! - \!  2 m_1 \! - m_2.
\end{align}
%
%
\vspace{-0.5in}
\subsection{Region \rm{IV}:  $ m_1+m_2 > 1$, and $m_3 \geq 2 (1- m_2)$ }
\begin{table}[t]
\centering
\caption{Optimal caching scheme for Region \rm{IV}}\label{tab_reg4}
\begin{tabular}{|l|l|l|}
  \hline
  Conditions & Placement scheme & Delivery scheme\\
  \hline	  
  $m_1+m_2+m_3 \geq 2$,  & $a_{\{1,2\}}=1-m_3 $,  &  $v_{\{1,2\}} = u^{\{1,2\}}_{\{1,3\}} = u^{\{1,2\}}_{\{2,3\}} =m_3-m_1 $, \\
   $1+m_1 \geq m_2 +m_3 $ & $a_{\{1,3\}}=1-m_2 $, & $v_{\{1,3\}} = u^{\{1,3\}}_{\{1,2\}} = u^{\{1,3\}}_{\{2,3\}} =m_2-m_1 $, \\
   & $a_{\{2,3\}}=1-m_1 $, & $v_{\{1,2,3\}} = 1+m_1 -(m_2+m_3) $, \\ 
	& $a_{\{1,2,3\}}=\sum_{j=1}^{3} m_j-2 $. & \\ 
\hline
  $m_1+m_2+m_3 \geq 2$,  & $a_{\{1,2\}}=1-m_3 $,  &  $v_{\{1\}} = m_2 +m_3 - (1+m_1)$, \\
   $1+m_1 < m_2 +m_3 $ & $a_{\{1,3\}}=1-m_2 $, & $v_{\{1,2\}} = u^{\{1,2\}}_{\{1,3\}} = u^{\{1,2\}}_{\{2,3\}} =1-m_2 $, \\
   & $a_{\{2,3\}}=1-m_1 $, & $v_{\{1,3\}} = u^{\{1,3\}}_{\{1,2\}} = u^{\{1,3\}}_{\{2,3\}} =1-m_3 $.\\ 
	& $a_{\{1,2,3\}}=\sum_{j=1}^{3} m_j-2 $. & \\ 
  \hline
  $m_1+m_2+m_3 < 2$,  & $a_{\{1\}}=2-\sum_{j=1}^{3} m_j$ ,  &  $v_{\{1,2\}} =  u^{\{1,2\}}_{\{2,3\}} =m_3-m_1 $, $u^{\{1,2\}}_{\{1\}} = a_{\{1\}}$, \\
   $1+m_1 \geq m_2 +m_3 $ & $a_{\{1,2\}}=m_1+m_2-1 $, &  $v_{\{1,3\}} = u^{\{1,3\}}_{\{2,3\}} =m_2-m_1 $, $u^{\{1,3\}}_{\{1\}} = a_{\{1\}}$, \\
   & $a_{\{1,3\}}=m_1+m_3-1 $, &  $v_{\{1,2,3\}} = 1+m_1  -(m_2+m_3) $, \\ 
	& $a_{\{2,3\}}=1-m_1 $. & $u^{\{1,2\}}_{\{1,3\}} \! = \! m_2 \! + \! 2 m_3 \! - \! 2$, $ u^{\{1,3\}}_{\{1,2\}} \! = \! 2 m_2 \! + \! m_3 \! - \! 2 $.\\
  \hline
  $m_1+m_2+m_3 < 2$,  & $a_{\{1\}}=2-\sum_{j=1}^{3} m_j$ ,  & $v_{\{1\}} = m_2+m_3 -(1+m_1) $,  \\
   $1+m_1 < m_2 +m_3 $ & $a_{\{1,2\}}=m_1+m_2-1 $, & $v_{\{1,2\}} =  u^{\{1,2\}}_{\{2,3\}} =1-m_2 $,  $u^{\{1,2\}}_{\{1\}} =  a_{\{1\}}$,\\
   & $a_{\{1,3\}}=m_1+m_3-1 $, &  $v_{\{1,3\}} = u^{\{1,3\}}_{\{2,3\}} =1-m_3 $, $u^{\{1,3\}}_{\{1\}} = a_{\{1\}}$, \\ 
	& $a_{\{2,3\}}=1-m_1 $. &  $u^{\{1,2\}}_{\{1,3\}} \! = \! m_1 + m_3 -1$, $ u^{\{1,3\}}_{\{1,2\}} \! = \! m_1 + m_2 -1$.\\
	\hline
\end{tabular}
\vspace{-0.3in}
\end{table} 
%
%
\textbf{Achievability:} There are multiple caching schemes that achieve $R^*_{\mathfrak{A},\mathfrak{D}}(\bm m) \!  = \! 1 \! - \!  m_1 $. In particular, we consider caching schemes that satisfy 
\vspace{-0.2in}
\begin{align}
&a_{\{1\}}-a_{\{1,2,3\}} = 2- (m_1 +m_2 + m_3), \ a_{\{2,3\}}=1-m_1, \label{eqn_app_3ue_rIV_1} \\
&a_{\{1,2\}}+a_{\{1,2,3\}} = m_1+m_2-1, \
a_{\{1,3\}}+a_{\{1,2,3\}} = m_1+m_3-1, \\
&v_{\{1\}} \! + \! v_{\{1,2\}} \! + \! v_{\{1,3\}} \! + \! v_{\{1,2,3\}} \! = \! 1 \! - \!m_1, \ \! v_{\{1,2\}} \! + \! v_{\{1,2,3\}} \! = \! 1 \! - \! m_2, \ \! v_{\{1,3\}} \! + \! v_{\{1,2,3\}} \! = \! 1 \! - \! m_3. \label{eqn_app_3ue_rIV_2}
\end{align}
In Table \ref{tab_reg4}, we provide one feasible solution to (\ref{eqn_app_3ue_rIV_1})-(\ref{eqn_app_3ue_rIV_2}).

\textbf{Converse:} From the cut-set bound in (\ref{eqn_bound_wang}), we have $R^*(\bm m) \geq  \! 1 \! - \!  m_1$. 



\section{Proof of Theorem \ref{thm_optmem}: Optimal Cache Sizes }\label{app_thm_optmem} 
 
First, for a given memory allocation with $\sum_{k=1}^{K} m_k \leq 1$, we have the following Lemma. 
\begin{lemma}\label{lemma_optcach}
For $C_1 \leq \dots \leq C_K$ and memory allocation $\bm m$ satisfying $\sum_{k=1}^{K} m_k \leq 1$, the optimal caching scheme for (\ref{eqn_opt2}) is given by $a^*_{\{j\}}=m_j$, $v^*_{\{i,j\}}=u^{ \! \! * {\{i,j\}}}_{\{i\}}= u^{ \! \! * {\{i,j\}}}_{\{j\}}=\min \{ a^*_{\{i\}},a^*_{\{j\}} \}$, and $v^*_{\{j\}}=1-m_j-\sum\limits_{i=1,i \neq j}^{K} \min \{m_i, m_j\}$.
\end{lemma}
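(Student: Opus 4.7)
The plan is to prove the lemma in two steps: first verify that the proposed triple $(\bm a^*, \bm u^*, \bm v^*)$ is feasible, then establish a matching converse by reorganizing the DCT objective along the capacity ordering and invoking Proposition~\ref{prop_side_info}.

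\textbf{Achievability.} I would verify that $(\bm a^*, \bm u^*, \bm v^*)$ lies in $\mathfrak{A}(\bm m) \times \mathfrak{D}(\bm a^*)$. With $a^*_\phi := 1 - \sum_k m_k \ge 0$ (by hypothesis) completing the allocation, the placement constraints in (\ref{eqn_feas_alloc}) hold with equality. The redundancy constraints (\ref{eqn_KUE_redund}) are trivially satisfied since every nonzero $a^*_{\mc S}$ has $|\mc S|=1$; the structural constraint (\ref{eqn_KUE_struct}) reduces to $u^{*\{i,j\}}_{\{i\}}=v^*_{\{i,j\}}$ because $\mc B^{\{i,j\}}_j$ meets $\{\mc S: a^*_{\mc S}>0\}$ only at $\{i\}$; and the delivery completion constraint $\sum_{\mc T \ni k}v^*_{\mc T}=1-m_k$ holds by construction. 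Non-negativity $v^*_{\{k\}} \ge 0$ follows from $\sum_{i \ne k} \min(m_i,m_k) \le \sum_{i \ne k} m_i \le 1-m_k$ using $\sum_i m_i \le 1$.

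\textbf{Converse.} Exploiting the ordering $C_1 \le \cdots \le C_K$, I rewrite the DCT as
\begin{equation*}
\Theta = \sum_{k=1}^K \frac{V'_k}{C_k}, \qquad V'_k := \sum_{\mc T:\ \min(\mc T)=k} v_{\mc T},
\end{equation*}
where $\min(\mc T)$ denotes the smallest index in $\mc T$ (equivalently, the slowest-capacity user it contains). For $l \le k$, I further define $W_{k,l} := \sum_{\mc T:\ \min(\mc T) = l,\, k \in \mc T} v_{\mc T}$, so that $W_{k,k}=V'_k$ and $\sum_{\mc T \ni k} v_{\mc T} = \sum_{l \le k} W_{k,l}$. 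The delivery completion constraint, together with $\sum_{\mc S \ni k} a_{\mc S} \le m_k$, yields $\sum_{l \le k} W_{k,l} \ge 1 - m_k$. For $l<k$, applying Proposition~\ref{prop_side_info} with $\mc S' = \{l\}$, $j = k$ and then with $\mc S' = \{k\}$, $j = l$ (for $K=2$, the structural constraint (\ref{eqn_KUE_struct}) is used directly in place of Proposition~\ref{prop_side_info}) gives
\begin{equation*}
W_{k,l} \le \sum_{\mc T \supset \{l,k\}} v_{\mc T} \le \min\!\Big(\sum_{\mc S \ni l,\, k \notin \mc S} a_{\mc S},\ \sum_{\mc S \ni k,\, l \notin \mc S} a_{\mc S}\Big) \le \min(m_l,m_k).
\end{equation*}
Combining these with the trivial bound $W_{k,k} \ge 0$ to cover any sign change, I conclude
\begin{equation*}
V'_k = W_{k,k} \ge 1 - m_k - \sum_{l<k} W_{k,l} \ge 1 - m_k - \sum_{l<k}\min(m_l,m_k).
\end{equation*}
A direct computation on the proposed scheme shows that $V'_k$ equals exactly this lower bound (since its only multicasts are pairs $\{l,k\}$ with $l<k$ contributing $\min(m_l,m_k)$ to $W_{k,l}$ and being billed at rate $C_l$ rather than $C_k$), so summing over $k$ gives $\Theta \ge \Theta^*$ with equality attained by $(\bm a^*, \bm u^*, \bm v^*)$.

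\textbf{Main obstacle.} The technical crux is noticing that the capacity-sorted decomposition $\Theta = \sum_k V'_k / C_k$ separates the objective on a per-slowest-user basis, and that Proposition~\ref{prop_side_info} applied in both directions yields precisely the symmetric bound $W_{k,l} \le \min(m_l, m_k)$ needed to turn the lower bound on $V'_k$ into the closed form matching the achievability. Once these two observations are aligned, the remaining estimates are elementary bookkeeping.
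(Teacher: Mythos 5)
Your proof is correct and follows essentially the same route as the paper: both converses combine the per-user delivery-completion constraints weighted by $1/C_k$ with the pairwise side-information bound $\sum_{\mc T \supset \{i,j\}} v_{\mc T} \leq \min\{m_i,m_j\}$ (Proposition~\ref{prop_side_info}), exploiting the ordering $C_1 \leq \dots \leq C_K$, and then check that the stated scheme meets the resulting bound with equality. Your bottleneck-user decomposition $\Theta=\sum_k V'_k/C_k$ is a slightly tidier piece of bookkeeping than the paper's over-counting relaxation into pairs $\{i,j\}$, but the underlying argument is the same.
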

\begin{proof} By combining (\ref{eqn_KUE_delv}) with (\ref{eqn_feas_alloc}), dividing it by $C_k$, and summing over $k$, we get
\begin{align}
&\sum_{k=1}^{K} \sum_{\mc T \subsetneq_{\phi} [K] : k \in \mc T} \frac{v_{\mc T}}{C_k} \geq \sum_{k=1}^{K} \frac{1-m_k}{C_k}, \\
&\sum_{k=1}^{K} \frac{v_{\{k\}}}{C_k} \geq \sum_{k=1}^{K} \frac{1-m_k}{C_k} - \! \! \! \! \! \sum_{\mc T \subsetneq_{\phi} [K] : |\mc T| \geq 2} \ \sum_{j \in \mc T} \frac{v_{\mc T}}{C_j}.
\end{align}
\vspace{-0.15in}
Therefore, we get the lower bound
\begin{align}
\Theta_{\mathfrak A, \mathfrak D}(m_{\text{tot}}, \bm C) \geq  \sum_{k=1}^{K} \frac{1-m_k}{C_k} - \! \! \! \! \! \! \sum_{\mc T \subsetneq_{\phi} [K] : |\mc T| \geq 2} \! \! \! \! \! \! v_{\mc T} \bigg( \frac{-1}{\min_{i \in \mc T} C_i}+ \sum_{j \in \mc T} \frac{1}{C_j} \bigg).
\end{align}
Additionally, for $C_1 \leq \dots \leq C_K $, we have 
\begin{align}
\Theta_{\mathfrak A, \mathfrak D}(m_{\text{tot}}, \bm C) &\geq  \sum_{k=1}^{K} \dfrac{1-m_k}{C_k} \! - \! \sum_{i=1}^{K-1} \sum_{j=i+1}^{K} \! \dfrac{1}{C_j} \sum_{\mc T \subsetneq_{\phi} [K] : \{i,j\} \subset \mc T} \! \! \! \! \! \! \! v_{\mc T} , \\
& \geq \sum_{k=1}^{K} \dfrac{1-m_k}{C_k} - \sum_{i=1}^{K-1} \sum_{j=i+1}^{K} \dfrac{\min\{m_i,m_j\}}{C_j},
\end{align}
where the last inequality follows from the fact that the multicast transmissions that include users $\{i,j\}$ are limited by the side-information stored at each of them, which is upper bounded by the cache memory size, i.e., $\sum_{\mc T \subsetneq_{\phi} [K]: \ \! \{i,j\} \subset \mc T}  v_{\mc T} \leq \min\{m_i,m_j\} $. Finally, for $\sum_{i=1}^{K} m_i  \leq 1 $, $a^*_{\{j\}}=m_j$, $v^*_{\{j\}}=1-m_j-\sum_{i=1,i \neq j}^{K} \min \{m_i, m_j\}$, and $v^*_{\{i,j\}}=u^{ \! \! * {\{i,j\}}}_{\{i\}}= u^{\! \! * {\{i,j\}}}_{\{j\}}=\min \{ a^*_{\{i\}},a^*_{\{j\}} \}$, is a feasible solution to (\ref{eqn_opt2}) that achieves the lower bound.
\end{proof}

Now, using Lemma \ref{lemma_optcach}, we can simplify (\ref{eqn_opt2}) to
\begin{subequations} \label{eqn_opt3}
\begin{align}
& \min_{ \bm m}  
& & \sum_{k=1}^{K} \dfrac{1-m_k}{C_k} - \sum_{i=1}^{K-1} \sum_{j=i+1}^{K} \dfrac{\min\{m_i,m_j\}}{C_j}\\
& \ \text{s.t.}
& &  \sum_{k=1}^{K} m_k \leq m_{\text{tot}}, \ \! 0 \leq m_k  \leq 1, \ \forall \ \! k \in [K]. 
\end{align}
\end{subequations}
Next, we show that the optimal memory allocation from (\ref{eqn_opt3}) satisfies $m_1^* \geq m_2^* \geq \dots \geq m_K^*$. 
\begin{lemma}\label{lemma_nondec}
For $C_1 \leq \dots \leq C_K $ and $m_{\text{tot}} \leq 1$, the objective function of (\ref{eqn_opt3}) satisfies $\Theta_{\mathfrak A, \mathfrak D}(\bm m) \leq \Theta_{\mathfrak A, \mathfrak D}( \tilde{ \bm m})$, where $m_i= \tilde m_i$, for $i \in [K]\setminus\{r,r+1\}$, and some $r \in [K \! - \! 1]$. Additionally, $m_r= \tilde m_{r+1}= \alpha + \delta$, $m_{r+1}= \tilde m_{r}= \alpha $, for $\delta, \alpha \geq 0$, and $m_1 \geq m_2 \geq \dots \geq m_{r}$. 
\end{lemma}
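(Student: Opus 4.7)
The plan is a direct computation: evaluate $\Theta_{\mathfrak A, \mathfrak D}(\tilde{\bm m}) - \Theta_{\mathfrak A, \mathfrak D}(\bm m)$ using the simplified objective in (\ref{eqn_opt3}) and show it is non-negative, so that swapping the out-of-order pair $(\tilde m_r, \tilde m_{r+1}) = (\alpha, \alpha+\delta)$ to $(m_r, m_{r+1}) = (\alpha+\delta, \alpha)$ can only decrease the delivery completion time. Since $\bm m$ and $\tilde{\bm m}$ agree outside coordinates $r$ and $r+1$, only a few terms change, so the calculation reduces to careful bookkeeping. I split the change into the contribution from the linear part $\sum_k(1-m_k)/C_k$ and the contribution from the $\min$-sum $\sum_{i<j}\min(m_i,m_j)/C_j$.

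For the linear part, only the $k=r$ and $k=r+1$ terms are affected, giving a contribution to $\Theta(\tilde{\bm m})-\Theta(\bm m)$ equal to
\begin{equation*}
\frac{m_r-\tilde m_r}{C_r}+\frac{m_{r+1}-\tilde m_{r+1}}{C_{r+1}}=\delta\left(\frac{1}{C_r}-\frac{1}{C_{r+1}}\right)\geq 0,
\end{equation*}
by $C_r\leq C_{r+1}$. For the $\min$-sum, pairs $(i,j)$ with $\{i,j\}\cap\{r,r+1\}=\phi$ are unchanged, and the pair $(r,r+1)$ itself is unchanged since $\min(\alpha,\alpha+\delta)=\alpha$ in both allocations. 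For each $j>r+1$, the two pairs $(r,j)$ and $(r+1,j)$ share the common weight $1/C_j$, and the values $\min(\alpha,m_j)$ and $\min(\alpha+\delta,m_j)$ merely swap between them, so their combined contribution is identical under $\bm m$ and $\tilde{\bm m}$.

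The only nontrivial terms are the pairs $(i,r)$ and $(i,r+1)$ with $i<r$. A direct evaluation shows that their combined change, measured as $\tilde{\bm m}$ minus $\bm m$ in the $\min$-sum, equals
\begin{equation*}
\bigl[\min(m_i,\alpha+\delta)-\min(m_i,\alpha)\bigr]\left(\frac{1}{C_{r+1}}-\frac{1}{C_r}\right)\leq 0,
\end{equation*}
because the bracket is non-negative and $1/C_{r+1}-1/C_r\leq 0$. Since this quantity enters $\Theta(\tilde{\bm m})-\Theta(\bm m)$ with a minus sign (the $\min$-sum appears with a negative coefficient in the objective), it contributes a non-negative amount; combined with the non-negative linear-part change, the total difference is non-negative, proving $\Theta(\bm m)\leq\Theta(\tilde{\bm m})$.

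The main obstacle is organizing the pair-by-pair case analysis of the $\min$-sum so that the cancellation for $j>r+1$ is transparent; this cancellation depends crucially on the fact that the larger index $j$ supplies the weight $1/C_j$ in both pairs, which is a structural feature of (\ref{eqn_opt3}). The hypothesis $m_1\geq m_2\geq\dots\geq m_r$ is not itself invoked in the exchange computation, but records the state of the bubble-sort style induction; iterating this exchange lemma establishes that any allocation can be rearranged into non-increasing order without increasing the DCT, so the optimum of (\ref{eqn_opt3}) can be sought under $m_1^*\geq m_2^*\geq\dots\geq m_K^*$, as needed for the subsequent characterization in Theorem~\ref{thm_optmem}.
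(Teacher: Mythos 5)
Your proof is correct and follows essentially the same exchange argument as the paper: split the difference into the linear part and the $\min$-sum part, observe that only the pairs involving indices $r$ and $r+1$ with a smaller partner $i<r$ contribute, and conclude from $C_r\leq C_{r+1}$. The only (minor) differences are that you make the cancellations for the pair $(r,r+1)$ and for $j>r+1$ explicit, and you correctly note that the hypothesis $m_1\geq\dots\geq m_r$ is not needed for the inequality itself, whereas the paper uses it to evaluate the difference exactly as $r\delta\left(\tfrac{1}{C_{r+1}}-\tfrac{1}{C_r}\right)$.
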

\vspace{-0.2in}
\begin{proof}
For $\bm m =[m_1, m_2, \dots, m_{r-1}, \alpha\!+\! \delta, \alpha, m_{r+2}, \dots, m_K]$ and $\tilde{ \bm  m} =[m_1, m_2, \dots, m_{r-1}, \alpha ,\alpha\!+\! \delta, m_{r+2}, \dots, m_K]$, we have $ \Theta_{\mathfrak A, \mathfrak D}(\bm m) -\Theta_{\mathfrak A, \mathfrak D}( \tilde{ \bm m}) = \chi_{1}+\chi_{2}$, where 
\begin{align}
\chi_{1}&=\frac{1-m_r}{C_r}+\frac{1-m_{r+1}}{C_{r+1}}-\frac{1-\tilde m_r}{C_r}-\frac{1-\tilde m_{r+1}}{C_{r+1}} \ = \delta \left( \frac{1}{C_{r+1}}-\frac{1}{C_{r}}\right), \\
\chi_{2}&= \! \sum_{i=1}^{r-1} \! \left( \! \frac{\min\{m_i,\tilde m_r\}}{C_r}+\frac{\min\{m_i,\tilde m_{r+1}\}}{C_{r+1}} \! \right)  \! - \! \sum_{i=1}^{r-1} \! \left(\frac{\min\{m_i,m_r\}}{C_r} + \frac{\min\{m_i, m_{r+1}\}}{C_{r+1}} \right), \\
&=\left(\frac{1}{C_{r+1}}-\frac{1}{C_{r}} \right) \sum_{i=1}^{r-1} \left( \min\{m_i,\alpha\!+\! \delta\} - \min\{m_i,\alpha\}\right) \ = \delta (r-1) \left( \frac{1}{C_{r+1}}-\frac{1}{C_{r}}\right).
\end{align}
Thus, $\Theta_{\mathfrak A, \mathfrak D}(\bm m)\! -\Theta_{\mathfrak A, \mathfrak D}( \tilde{ \bm m})\! = r \delta  \left( \frac{1}{C_{r+1}}\!-\!\frac{1}{C_{r}}\right)\! \leq 0$, as $C_{r+1}\! \geq C_{r} $.
\end{proof}
\vspace{-.05 in}
Using Lemma \ref{lemma_nondec}, (\ref{eqn_opt3}) can be simplified to (\ref{eqn_opt4}).
\vspace{-.05 in}
\begin{lemma}
For $C_1 \leq \dots \leq C_K $ and $m_{\text{tot}} \leq 1$, optimization problem (\ref{eqn_opt2}) reduces to
\begin{subequations} \label{eqn_opt4}
\begin{align}
& \min_{ \bm m}  
& & \sum_{k=1}^{K} \dfrac{1-k \ \! m_k}{C_k} \\
& \ \text{s.t.}
& &  \sum_{k=1}^{K} m_k \leq m_{\text{tot}}, \ \! 0 \leq m_{k+1}  \leq m_{k}, \ \forall \ \! k \in [K \! - \! 1].
\end{align}
\end{subequations}
\end{lemma}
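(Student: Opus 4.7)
The plan is to use Lemma \ref{lemma_nondec} to add the ordering constraint $m_1 \geq m_2 \geq \dots \geq m_K$ to problem (\ref{eqn_opt3}) without loss of optimality, and then to evaluate the double sum in the objective in closed form under this ordering. The reduction from (\ref{eqn_opt2}) to (\ref{eqn_opt3}) is already handled by Lemma \ref{lemma_optcach}, so only this last step remains.

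First, I will argue by a sorting reduction: starting from any feasible $\bm m$ of (\ref{eqn_opt3}), I repeatedly swap adjacent pairs with $m_r < m_{r+1}$ using Lemma \ref{lemma_nondec}. Processing positions $r = 1, 2, \dots, K-1$ left to right---bringing the largest remaining entry of $(m_r, m_{r+1}, \dots, m_K)$ into position $r$ by a sequence of adjacent transpositions---maintains the auxiliary hypothesis $m_1 \geq \dots \geq m_r$ required by the lemma at each invocation. Each swap preserves feasibility, since the constraints $\sum_k m_k \leq m_{\text{tot}}$ and $0 \leq m_k \leq 1$ are symmetric in the $m_k$'s, and does not increase the objective. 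Iterating yields a non-increasing vector $\bm m^{\downarrow}$ with $\Theta_{\mathfrak A, \mathfrak D}(\bm m^{\downarrow}) \leq \Theta_{\mathfrak A, \mathfrak D}(\bm m)$, so the optimum of (\ref{eqn_opt3}) is attained on the restricted set $\{m_1 \geq m_2 \geq \dots \geq m_K\}$.

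Second, under this ordering, $\min\{m_i, m_j\} = m_j$ for every pair $i < j$, so the double sum collapses:
\begin{align*}
\sum_{i=1}^{K-1} \sum_{j=i+1}^{K} \frac{\min\{m_i, m_j\}}{C_j} = \sum_{j=2}^{K} \frac{(j-1)\,m_j}{C_j}.
\end{align*}
Combining this with $\sum_{k=1}^{K} m_k/C_k$ from the first term of the objective gives $\sum_{k=1}^{K} k\, m_k / C_k$, so the objective of (\ref{eqn_opt3}) reduces to $\sum_{k=1}^{K} (1 - k m_k)/C_k$, which is exactly the objective of (\ref{eqn_opt4}). Adjoining the explicit constraints $m_{k+1} \leq m_k$ completes the reduction.

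The main obstacle is the sorting step itself: Lemma \ref{lemma_nondec} carries the auxiliary hypothesis $m_1 \geq \dots \geq m_r$, so the transpositions must be scheduled to keep this invariant intact---the left-to-right selection procedure above does exactly this. Everything else is a direct algebraic rewrite.
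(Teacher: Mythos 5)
Your algebraic step is fine: under $m_1\ge\dots\ge m_K$ the double sum in (\ref{eqn_opt3}) collapses to $\sum_{j}(j-1)m_j/C_j$, which combined with $\sum_k m_k/C_k$ gives the objective of (\ref{eqn_opt4}), and the box constraints are symmetric so feasibility is preserved under permutation. The gap is in the sorting step, precisely the point you flag as the main obstacle. Lemma \ref{lemma_nondec} requires, for a transposition at position $s$, that the \emph{post-swap} prefix satisfy $m_1\ge\dots\ge m_s$ with $m_s=\alpha+\delta$ the \emph{larger} of the two swapped entries, i.e.\ every entry to the left of position $s$ must already dominate the element being moved in. Your left-to-right selection procedure violates this at every intermediate transposition: while the largest remaining entry is being bubbled from its initial position $p$ toward position $r$, the entries it passes over at positions $r,\dots,p-2$ are by definition no larger than it, so immediately after the swap at position $s>r$ one has $m_{s-1}\le m_s$, not $m_{s-1}\ge m_s$. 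Only the final swap (into position $r$) satisfies the hypothesis. Worse, no rescheduling can repair this as long as the lemma is used as a black box: for $\bm m=[1,2,3]$ the entry $3$ must at some point be swapped from position $3$ into position $2$, and at that moment the hypothesis demands $m_1\ge 3$, which is impossible since $m_1\in\{1,2\}$ until $3$ reaches the front.

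The repair is short, and you should state it: the monotone-prefix hypothesis in Lemma \ref{lemma_nondec} is needed only for the \emph{exact} evaluation $\chi_2=\delta(r-1)\left(\tfrac{1}{C_{r+1}}-\tfrac{1}{C_r}\right)$, not for the sign. Since $0\le\min\{m_i,\alpha+\delta\}-\min\{m_i,\alpha\}\le\delta$ for every $i$ and $\tfrac{1}{C_{r+1}}-\tfrac{1}{C_r}\le 0$, one still gets $\chi_2\le 0$ and hence $\Theta_{\mathfrak A,\mathfrak D}(\bm m)\le\Theta_{\mathfrak A,\mathfrak D}(\tilde{\bm m})$ for an arbitrary prefix; with the hypothesis dropped, any bubble-sort schedule of adjacent inversion-fixing swaps monotonically decreases the objective and terminates at the non-increasing rearrangement. (Alternatively, compare $\bm m$ directly with its sorted rearrangement in one shot.) With that amendment your argument goes through; the paper itself gives no more detail than ``using Lemma \ref{lemma_nondec}, (\ref{eqn_opt3}) can be simplified to (\ref{eqn_opt4}),'' so it glosses over the same scheduling issue that your proof makes explicit but does not actually resolve.
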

Equivalently, the optimal memory allocation for (\ref{eqn_opt4}) is obtained by solving 
\begin{subequations} \label{eqn_app_mem_opt1}
\begin{align}
& \max_{ \bm m}  
& & \sum_{k=1}^{K} \dfrac{k \ \! m_k}{C_k} \\
& \ \text{s.t.}
& &  \sum_{k=1}^{K} m_k \leq m_{\text{tot}}, \ \! 0 \leq m_{k+1}  \leq m_{k}, \ \forall \ \! k \in [K \! - \! 1].
\end{align}
\end{subequations}

Finally, the optimal memory allocation in Theorem \ref{thm_optmem} is obtained by solving the dual of the linear program in (\ref{eqn_app_mem_opt1}).
\bibliographystyle{IEEEtran}
\bibliography{IEEEabrv,references}
 
\end{document}